\newcommand{\OnCTRSD}{{\sc On-CT-RSD}\xspace}
\newcommand{\OnCARSD}{{\sc On-CA-RSD}\xspace}
\newcommand{\CABPRSD}{{\sc CA-BP-RSD}\xspace}
\newcommand{\CABQRSD}{{\sc CA-BQ-RSD}\xspace}
\newcommand{\CABERSD}{{\sc CA-BE-RSD}\xspace}
\newcommand{\FFCTRSD}{{\sc FF-CT-RSD}\xspace}
\newcommand{\OnCARSDstar}{{\sc On-CA-RSD$^*$}\xspace}
\newcommand{\FFCTRSDstar}{{\sc FF-CT-RSD$^*$}\xspace}
\newcommand{\HAX}{HA+X\xspace}
\newcommand{\V}{\mathcal{V}}
\newcommand{\N}{\mathcal{N}}
\renewcommand{\AA}{\mathcal{A}}
\newcommand{\A}{A}
\renewcommand{\G}{\mathcal{G}}
\newcommand{\E}{\mathcal{E}}
\newcommand{\Q}{\mathbb{Q}}
\newcommand{\SW}{\text{SW}}
\newcommand{\OPT}{\text{OPT}}
\newcommand{\Exp}{\mathbb{E}}
\newcommand{\tup}[1]{\langle #1 \rangle}
\newcommand{\rsd}{\textsc{RSD}\xspace}
\newcommand{\rsdstar}{\textsc{RSD}$^*$\xspace}
\newcommand{\sd}{\textsc{SD}\xspace}
\mathchardef\mhyphen="2D 
\renewcommand{\phi}{\varphi} 
\begin{document}

\title{Keeping Your Friends Close: Land Allocation with Friends}  



%
\author{Edith Elkind}
\affiliation{%
 \institution{University of Oxford}
 \streetaddress{Wolfson Building, Parks Road
}
 \city{Oxford} 
 \state{United Kingdom} 
 \postcode{OX1 3QD}
}
\email{eelkind@gmail.com}

\author{Neel Patel}
\affiliation{%
 \institution{National University of Singapore}
 \streetaddress{13 Computing Drive 
}
 \city{} 
 \state{Singapore} 
 \postcode{117417}
}
\email{neeltuk@gmail.com}

\author{Alan Tsang}
\affiliation{%
 \institution{National University of Singapore}
 \streetaddress{13 Computing Drive 
}
 \city{} 
 \state{Singapore} 
 \postcode{117417}
}
\email{akhtsang@gmail.com}

\author{Yair Zick}
\affiliation{%
 \institution{National University of Singapore}
 \streetaddress{13 Computing Drive 
}
 \city{} 
 \state{Singapore} 
 \postcode{117417}
}
\email{dcsyaz@nus.edu.sg}

%
%
%
%
%
%
%
\begin{abstract}
    We examine the problem of assigning plots of land to prospective buyers who prefer living next to their friends. They care not only about the plot they receive, but also about their neighbors. This externality results in a highly non-trivial problem structure, as both friendship and land value play a role in determining agent behavior. 
    We examine mechanisms that guarantee truthful reporting of \emph{both} land values and friendships. We propose variants of {\em random serial dictatorship} (\rsd) that can offer both truthfulness and welfare guarantees. Interestingly, our social welfare guarantees are parameterized by the value of friendship: if these values are low, enforcing truthful behavior results in poor welfare guarantees and imposes significant constraints on agents' choices; if they are high, we achieve 
    good approximation to the optimal social welfare. 
\end{abstract}

\maketitle

\section{Introduction}\label{sec:intro}
A village in a quaint part of country X recently received a permission to expand. Predetermined plots of land, of approximately equal size and price, have been drawn and must be assigned to prospective buyers\footnote{As it happens, one of these buyers is the sister of an author.}. However, while similar in size and official value, plots are not viewed as identical by the buyers: some buyers prefer living close to the village center, others favor living in an area with a view of the surrounding mountains, and yet others are interested in level plots amenable to a home garden. Land ownership laws preclude direct ownership by buyers; rather, land is leased from a central governing body, and prospective buyers are prohibited by law from paying each other in order to secure land plots. In other words, land plots are to be treated as {\em indivisible goods}, and are to be allocated without monetary transfers. Prospective buyers form a small, close-knit community. Several of them are siblings (with parents having lived in the village for decades), or are long-term residents (in rented properties), with friends they'd like to be close to. Consequently, buyers have preferences not just over plots, but also over their potential neighbors. In fact, some pairs of buyers only care about being neighbors, regardless of where they end up. Thus, we are interested in  {\em mechanisms that would enable the buyers to distribute the plots among themselves in a fair and efficient manner, and account for friendships}.

\begin{figure}[ht]

  \centering
  \includegraphics[width=.8\linewidth]{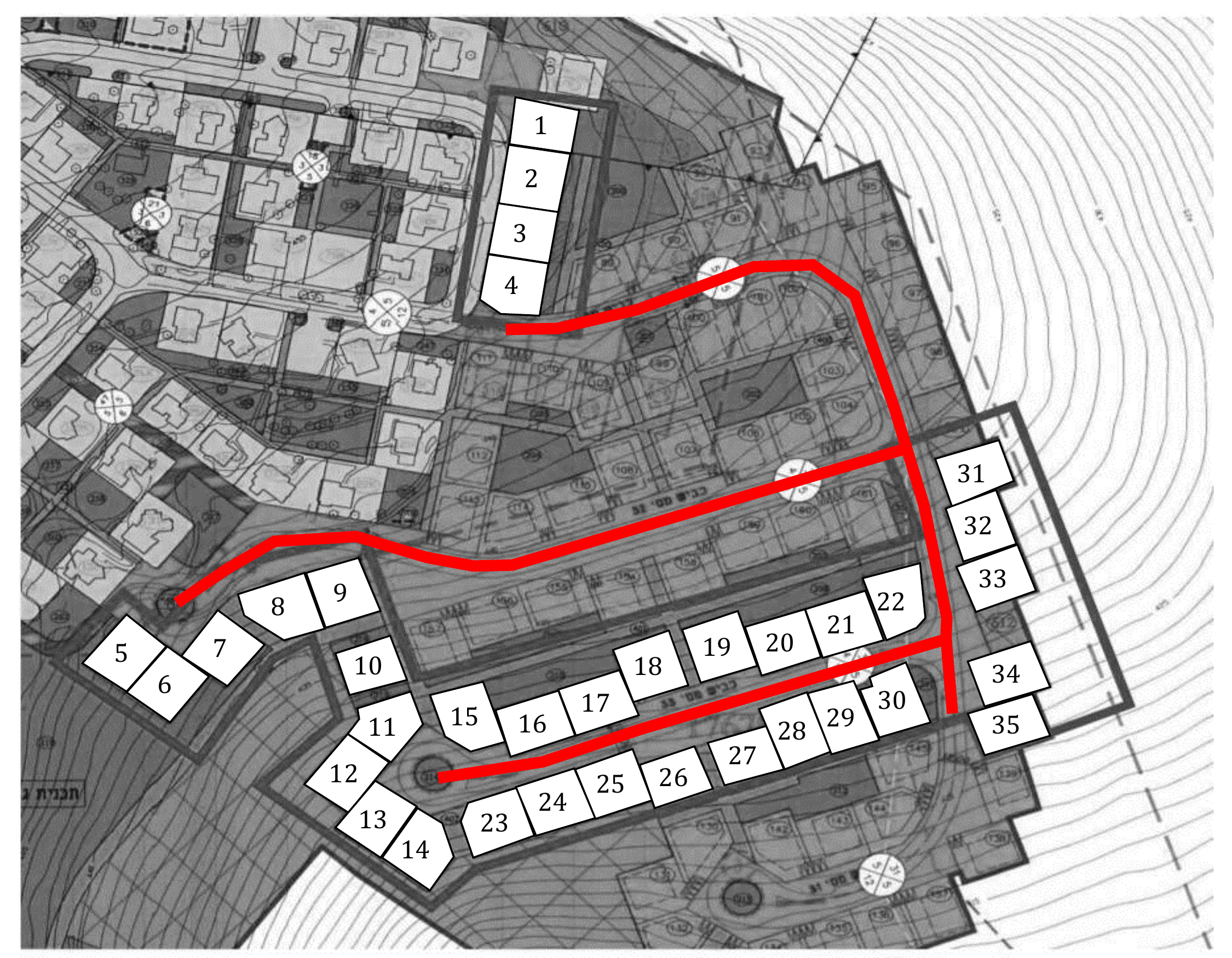}  
  \caption{Map of the proposed village expansion (plots are numbered $1-35$). Red lines denote roads.}
  \label{fig:mapofvillage}
\end{figure}
\begin{figure}[ht]
  \centering
  \includegraphics[width=\linewidth]{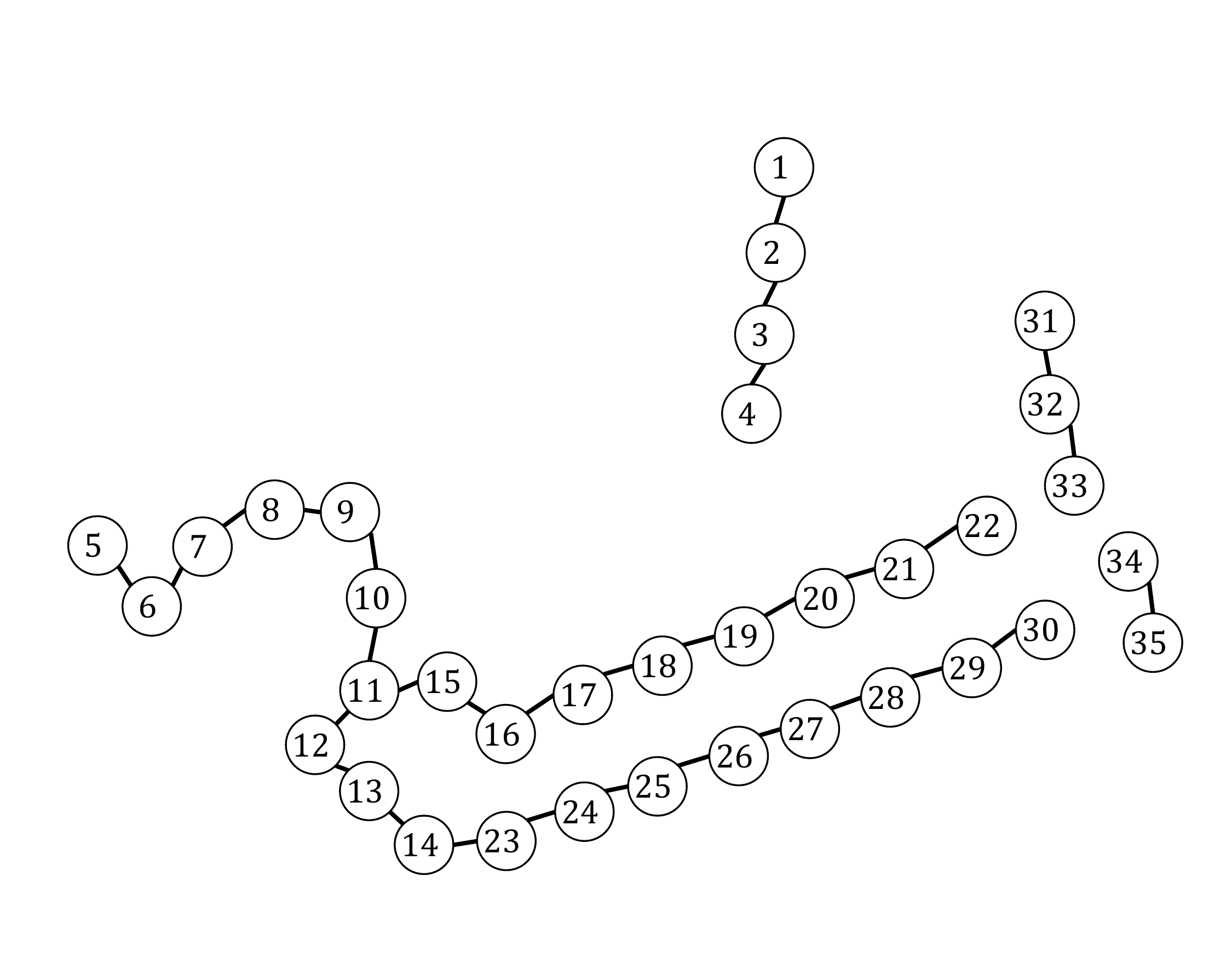}  

\caption{The plot graph based on Figure \ref{fig:mapofvillage}. Two plots are adjacent if they share a border}
\label{fig:plotgraph}
\end{figure}

\subsection{Our Contributions}\label{sec:contrib}
We briefly discuss the complexity of finding an allocation that maximizes the social welfare in the complete information scenario, showing that this problem is NP-hard. We then focus on the setting where each agent has at most one friend. This constraint is both realistic and simplifies our computational problem significantly: while our problem remains NP-hard, it admits a $2$-approximation algorithm in this case. 

We then investigate our problem from the perspective of mechanism design without money: can we incentivize agents to truthfully report their plot values and friendship information? Given our application domain, we are interested in mechanisms that are simple to describe and participate in, while providing good social welfare guarantees. Since our problem generalizes the one-sided matching problem \cite{HZ79}, a natural starting point is the Random Serial Dictatorship (RSD) mechanism, under which agents pick plots one by one. We establish that \rsd does not perform well in the presence of friendships, and explore several modifications of \rsd, in which the picking order is based on friendship information. We identify settings in which our mechanisms are truthful and produce Pareto optimal outcomes, and provide bounds on their expected social welfare for the case where agents have binary valuations for the plots. 

\subsection{Related Work}\label{sec:related}
%
One-sided matching markets have been studied for several decades. \citet{HZ79} propose a Pareto optimal, envy-free mechanism, which is, however, not truthful. \citet{svensson1999strategy} shows that the Random Serial Dictatorship (RSD) is the only truthful mechanism that satisfies ex-post Pareto optimality, anonymity and non-bossiness.

The social welfare of truthful mechanisms in one-sided matching markets has been studied by \citet{bhalgat2011social} for rank-based valuation functions. \citet{filos2014social} consider the social welfare of RSD for unit sum preferences, and show that RSD offers a $\sqrt{n}$-approximation to the optimal social welfare in this case. \citet{adamczyk2014efficiency} focus on binary and unit-range preferences, and show that RSD offers a $3$-approximation to the optimal social welfare for binary preferences, and a $\sqrt{en}$-approximation for unit range preferences. \citet{christodoulou2016social} analyze the Price of Anarchy (PoA) of one-sided matching markets for unit-sum preferences. They show that PoA for RSD is $\mathcal{O}(\sqrt n)$. \citet{krysta2016house} study the one-sided matching market problem under matroid constraints. They propose a truthful mechanism and show that it offers $\frac{\mathrm e}{\mathrm e - 1}$-approximation of the optimal social welfare.

\citet{bodine2011peer} analyze a housing allocation problem where students have an inherent friendship structure. 
They focus on allocation stability and social welfare, rather than strategic behavior. An online variant of this problem is studied by \citet{bei17}. 
\citet{Massand2019graphical} also consider the stability of a one-sided matching market with externalities, but assume that agents cannot misreport their valuations. 

\section{Model and Preliminaries}
We omit several technical proofs from the paper due to page limits; these will appear in a full version of this work.


We consider a set of agents $N = \{1, \ldots n\}$ (land buyers) who need to be matched to $n$ plots $\V = \{v_1, \ldots v_n\}$.  
Each agent receives exactly one plot. Thus, the goal is to output an {\em allocation}, i.e., a bijection $\A : N \rightarrow \V$, where agent $i$ gets plot $\A(i)$.

We represent neighboring plots using a {\em plot graph} $\G =\tup{\V,\E}$: this is an undirected graph where \emph{nearby} plots $w$ and $v$ are connected by an edge $\{w,v\} \in \E$ (see Figure \ref{fig:mapofvillage}).
Each agent $i\in N$ has a {\em valuation function} $u_i:\V\to[0,1]\cap{\mathbb Q}$: $u_i(v)$ is the value $i$ derives from receiving plot $v$. 
Agents have friends, and care about living next to them. We represent friendships by a weighted directed {\em friendship graph} $\tup{N, F}$, where
$(i, j)\in F$ indicates that $i$ and $j$ are friends and the edge weight $\phi_{i, j}\in\Q_{\ge 0}$ is the additional utility $i$ obtains for living next to $j$. 
We assume that friendships are reciprocal, but not necessarily symmetric; i.e., $(i, j)\in F \Leftrightarrow (j, i)\in F$, but it may be the case that $\phi_{i, j}\neq \phi_{j, i}$.
Let $F^*=\left\{\{i, j\}: (i, j)\in F\right\}$; the unweighted undirected graph $\tup{N, F^*}$ captures the presence of friendships, but not their weights. We set $\phi_{\min}=\min_{(i, j)\in F}\phi_{i, j}$. The quantity $\phi_{\min}$ plays an important role in our analysis: some of our proposed mechanisms offer better performance guarantees when $\phi_{\min}>1$, i.e. when the value of friendship exceeds the value of any plot.



The {\em utility} $U_i(\A)$ of agent $i$ under allocation $\A$ is  
\begin{equation}
u_i(\A(i)) + \sum_{(i, j)\in F} \phi_{i,j}\times \mathbb{I}\left(\{\A(i),\A(j)\} \in \E\right).\label{eq:utility}
\end{equation}  
The first term in \eqref{eq:utility} is agent $i$'s utility from the plot she receives; the second term is her (non-negative) \emph{externality} for nearby friends. 

An instance of our allocation problem is a tuple 
$$
I=\tup{N, \V, \E, F, (u_i)_{i\in N}, (\phi_{i, j})_{(i, j)\in F)}};
$$
let $\AA(I)$ denote the set of all allocations for an instance $I$.

The {\em social welfare} of an allocation $\A\in \AA(I)$ is defined as the sum of agents' utilities: $\SW(\A)=\sum_{i\in N}U_i(\A)$.
Let $\OPT(I)=\max_{\A\in\AA(I)}\SW(\A)$.
Given two allocations $\A, \A'\in \AA(I)$, we say that $\A'$ {\em dominates} $\A$ if $U_i(\A') \ge U_i(\A)$ for all $i\in N$, and the inequality is strict for at least one agent.
An allocation $\A$ is {\em Pareto optimal (PO)} if no other allocation dominates it. A non-PO allocation presents an avoidable loss of social welfare; we are thus interested in algorithms that output PO allocations.

We consider several special cases of our problem. 
We say that an instance $I$ is {\em friendship-uniform} if there exists a positive value $\phi\in\Q_{\ge 0}$ such that $\phi_{i, j}=\phi$ for all $(i, j)\in F$. We say that $I$ is {\em binary}
if $u_i(v)\in \{0, 1\}$ for all $i\in N$, $v\in V$. We say that $I$ is {\em generic} if for every $i\in N$, every pair of plots $v, w\in\V$ and every edge $(i, j)\in F$ we have $u_i(v)\neq u_i(w)$, $u_i(v)\neq u_i(w)+\phi_{i, j}$. If each agent has at most one friend (an important assumption for the sequel), in a generic instance no agent is indifferent between two plots, even if one of them is adjacent to her friend's plot. 

\section{Optimal Friend-Constrained Allocations}\label{sec:compl}
We first analyze the complexity of finding (approximately) optimal allocations under the assumption of complete information, i.e., when the weighted friendship graph as well as agents' valuation functions are known. Formally, given an instance $I$ of our problem and a positive rational value $T$, 
we ask whether there is an allocation $\A$ with $\SW(\A)\ge T$; we refer to this problem as {\sc SW-Opt}.

We first observe that even in the friendship-uniform case our 
problem is at least as hard as {\sc Subgraph Isomorphism}, and hence NP-hard \cite{gj79}. To see this, let all agents value all plots at $c\ge 0$ and each friendship at $\phi>0$; the maximum social welfare achievable is $n\times c + \phi\times |F|$.
This welfare is obtained in allocations in which every pair of friends receive adjacent plots; such allocations exist if and only if $\tup{N, F^*}$ is isomorphic to a subgraph of the plot graph $\G$. This observation establishes the following proposition.

\begin{proposition}\label{prop:np-hard}
{\sc SW-Opt} is NP-complete. This result
holds even if there exist $c, \phi>0$ such that
$u_i(v)=c$ for all $i\in N$, $v\in V$ and $\phi_{i, j}=\phi$ for all $(i, j)\in F$.
\end{proposition}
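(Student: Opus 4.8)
The plan is to establish the two directions separately: membership in NP, which is immediate, and NP-hardness, which we obtain by formalizing the reduction from {\sc Subgraph Isomorphism} sketched in the text above. For membership, observe that an allocation $\A$ is a bijection $N\to\V$ and hence has size polynomial in the input; given $\A$ we can evaluate each $U_i(\A)$ directly from \eqref{eq:utility} in polynomial time, sum to obtain $\SW(\A)$, and compare against $T$. Thus {\sc SW-Opt} $\in$ NP.

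For hardness I would reduce from {\sc Subgraph Isomorphism}: given graphs $H=\tup{V_H,E_H}$ and $G'=\tup{V_{G'},E_{G'}}$ with $|V_H|\le|V_{G'}|$, decide whether $G'$ contains a subgraph isomorphic to $H$; this problem is NP-complete \cite{gj79}. The construction sets the plot graph to be $\G=G'$, so that $\V=V_{G'}$ and $n=|V_{G'}|$ plots are available. To build a valid instance I take the friendship structure $\tup{N,F^*}$ to be a copy of $H$ padded with $n-|V_H|$ isolated agents (agents with no friends), so that $|N|=n=|\V|$ as our model requires; each edge of $H$ becomes a reciprocal friendship pair $(i,j),(j,i)\in F$. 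Finally I set $u_i(v)=c$ for all $i,v$ and $\phi_{i,j}=\phi$ for all $(i,j)\in F$, with $c,\phi>0$ arbitrary (e.g. $c=\phi=1$), and put the target at $T=n\cdot c+\phi\cdot|F|$.

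The correctness argument exploits the observation already made in the text: since every plot is worth exactly $c$ to every agent, the plot term of $\SW(\A)$ equals $n\cdot c$ for \emph{every} allocation $\A$, so welfare is governed solely by the friendship term $\phi\cdot\sum_{(i,j)\in F}\mathbb{I}(\{\A(i),\A(j)\}\in\E)$. This term attains its maximum $\phi\cdot|F|$ precisely when every friend pair is placed on a pair of adjacent plots. Hence $\SW(\A)\ge T$ for some $\A$ if and only if there is a bijection from $N$ to $\V$ mapping each $F^*$-edge to an $\E$-edge; restricting such a bijection to the non-isolated agents (those forming $V_H$) yields exactly an embedding of $H$ as a subgraph of $G'$, and conversely any such embedding extends to a full bijection by assigning the isolated agents to the leftover plots arbitrarily. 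This establishes the equivalence and completes the reduction.

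The step I expect to require the most care is reconciling the bijective nature of an allocation with the inequality $|V_H|\le|V_{G'}|$ permitted by {\sc Subgraph Isomorphism}: our model forces $|N|=|\V|$, so the padding with isolated agents is essential, and I must verify that these dummy agents neither contribute to nor constrain the friendship term (they do not, since they have no incident edges in $F$) and that the leftover plots can always absorb them. A secondary point to check carefully is the bookkeeping between the directed edge set $F$ and the undirected structure $F^*$ when asserting the value of $T$ and when translating adjacency of friend pairs into a subgraph embedding; both orientations $(i,j)$ and $(j,i)$ reward the same adjacency $\{\A(i),\A(j)\}\in\E$, so the counting stays consistent.
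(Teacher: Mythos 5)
Your proof is correct and follows essentially the same route as the paper: the paper also reduces from {\sc Subgraph Isomorphism} by making all plot values equal to $c$ and all friendship weights equal to $\phi$, so that welfare $n\cdot c+\phi\cdot|F|$ is attainable if and only if $\tup{N,F^*}$ embeds as a subgraph of $\G$. Your write-up merely makes explicit the details the paper leaves implicit (NP membership, padding with friendless agents so that $|N|=|\V|$, and the directed-versus-undirected edge bookkeeping), all of which you handle correctly.
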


The proof of Proposition~\ref{prop:np-hard} shows that {\sc SW-Opt} is NP-hard even if the input instance is friendship-uniform and (i) $\tup{N, F^*}$ consists of a single clique and a collection of singletons (in which case our problem is at least as hard as {\sc Clique}), or (ii) $\tup{N, F^*}$ has maximum degree 2 (in which case our problem is at least as hard as {\sc Hamiltonian Cycle}). The reduction from {\sc Clique} with $c=0$ also shows that {\sc SW-Opt} is hard to approximate. 

Motivated by these hardness/inapproximability results, in the remainder of the paper we focus on the setting where $\tup{N, F^*}$ has maximum degree $1$, i.e., it is a collection of edges (pairs of friends) and singleton nodes. 
In this case, the respective subgraph isomorphism problem reduces to finding a maximum matching in the plot graph, which can be done in polynomial time. 
While at a first glance this variant of the model may appear to be very restrictive, it is quite natural in our setting. Indeed, buying land is a serious commitment, so the `friendships' in our context are typically sibling relationships, or other tight and long-running connections between households, and it is unlikely that a household is engaged in several such connections. 

Nevertheless, even this special case of {\sc SW-Opt} is NP-hard. The hardness result holds even in the friendship-uniform case and if the plot graph $\G$ consists of a single path (i.e., plots are located along a road) and several isolated plots. 

\begin{restatable}{theorem}{npcompPRM}\label{thm:sw-nphard}
{\sc SW-Opt} is NP-complete even if the instance is binary and friendship-uniform, the friendship graph $\tup{N, F^*}$ has maximum degree $1$, and the plot graph $\G$ consists of a single path and isolated nodes.
\end{restatable}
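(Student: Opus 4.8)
An allocation $\A$ (a bijection $N\to\V$) is a polynomial-size certificate, and given $\A$ one can evaluate $\SW(\A)$ and test $\SW(\A)\ge T$ in polynomial time directly from \eqref{eq:utility}. So the content of the theorem is the hardness direction. Because valuations are binary and the plot graph is a single path whose length is polynomial in the input size, every numeric quantity in the instance is effectively encoded in unary; a reduction from a merely weakly NP-hard problem would therefore not suffice, and I would reduce from a \emph{strongly} NP-complete problem. Since a single path is a one-dimensional arena, the natural candidate is a packing/partition problem, and the plan is to reduce from \textsc{3-Partition}: given $3m$ integers $a_1,\dots,a_{3m}$ with $a_j\in(B/4,B/2)$ and $\sum_j a_j = mB$, decide whether they split into $m$ triples each summing to $B$.

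\textbf{The construction.} I would lay the path out as $m$ consecutive \emph{bins}, each a block of $B$ positions, with a short \emph{barrier} gadget between adjacent bins whose dedicated agents are the only ones valuing the barrier plots; this way, in any allocation attaining the target welfare, no friend pair can straddle a bin boundary, so each bin acts as an isolated segment of capacity $B$. Each integer $a_j$ is encoded by an \emph{item gadget} built around a single friend pair (plus dedicated plots and filler agents), whose binary valuations are tuned so that the gadget can be placed profitably only inside one bin, where it then occupies an effective footprint of $a_j$ positions. Isolated plots serve as a zero-value overflow region. I would set the common (friendship-uniform) weight $\phi$ large enough that every allocation with $\SW(\A)\ge T$ must realise \emph{all} friend adjacencies, and fix $T$ to the value of an allocation in which, simultaneously, all friendships are satisfied and every agent sits on a plot she values. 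Reaching $T$ then forces an exact packing of the item gadgets into the $m$ capacity-$B$ bins, which exists precisely when the \textsc{3-Partition} instance is a yes-instance. Completeness (a valid partition yields such an allocation) is routine; soundness amounts to showing any allocation with $\SW(\A)\ge T$ induces a valid partition.

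\textbf{Main obstacle.} The crux is making the item gadgets work under the two severe restrictions of the statement: friendships form a matching (maximum degree $1$), so the only rigid units are size-$2$ dominoes and size-$1$ singletons, whereas a \textsc{3-Partition} item has size $a_j$ and must be \emph{indivisible}. I therefore expect the hard part to be (i) building, from a single friend pair and binary valuations alone, an item that behaves as an indivisible block of size $a_j$ confined to one bin, in particular preventing an item from being split across bins while still leaving free the choice of \emph{which} bin it occupies; and (ii) decoupling the two terms of $\SW$, i.e.\ choosing $\phi$ and $T$ so that near the threshold there is no profitable trade-off between forgoing a friendship bonus to gain plot value or vice versa. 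Getting this decoupling exactly right is what makes both directions of the equivalence go through, and it is where most of the technical care will be required.
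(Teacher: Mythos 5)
Your NP-membership argument and your observation that the reduction must come from a strongly NP-hard (or purely combinatorial) problem are both fine, but the hardness direction has a gap that is not just ``technical care left for later'' --- the item gadget you need cannot exist in the form you describe. With a friendship graph of maximum degree $1$, any agent outside a friend pair has a utility that depends \emph{only} on her own plot: she has no friendship term at all, so her contribution to $\SW$ is unchanged under any permutation of the filler agents among plots they value. Consequently there is no mechanism --- neither friendships (the pair's degree budget is already spent) nor binary valuations (which cannot express ``I am happy only if my pair-mate is in this bin'') --- that binds the $a_j-2$ filler agents of an item to its friend pair. Fillers from different items can be mixed arbitrarily across bins without changing the welfare, so an allocation reaching your threshold $T$ need not induce any packing: the only rigid units available are dominoes (friend pairs) and singletons, and an ``indivisible block of effective footprint $a_j$'' is unobtainable. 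This is exactly the obstacle you flag as item (i), but it is not a difficulty to be engineered around within your framework; it is the reason the framework fails, so the soundness direction of your reduction collapses.

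The paper takes a different route that sidesteps sizes entirely: it reduces from {\sc Path Rainbow Matching} (given a properly edge-colored path, decide if there is a matching of size $\ge k$ using pairwise distinct colors), which is NP-hard and purely combinatorial. There, each color $c_i$ yields one friend pair $\{2i-1,2i\}$ with \emph{small} friendship weight $\phi=.1$ (the opposite of your large-$\phi$ regime), each color agent gets a private isolated plot she values at $1$ (so the baseline welfare $2q$ from plot values is always achievable and $2\phi<1$ makes abandoning one's friend for one's valued plot always profitable), and the valuations on the path alternate along the edges of each color class so that a pair can collect both plot value and the friendship bonus only by sitting on a path edge of their own color. The threshold $2q+2k\phi$ is then reachable iff a rainbow matching of size $k$ exists: all of the hardness lives in the interaction between pair adjacency and disjointness/color constraints, i.e.\ in exactly the dominoes-and-singletons structure that your 3-\textsc{Partition} plan cannot exploit. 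If you want to salvage your write-up, you should replace the packing source problem with a combinatorial one of this kind rather than try to build size-$a_j$ gadgets.
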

\begin{proof}
It is immediate that this problem is in NP: we can compute the social welfare of a given allocation using formula~\eqref{eqn:utility-fn}. 
To prove hardness, we provide an NP-hardness reduction from {\sc Path Rainbow Matching}.

An instance of the {\sc Path Rainbow Matching} problem is given by an integer $k$, and a properly edge-colored path, i.e., an undirected path $P=\tup{V, E}$ with vertices $V=\{v_1, \dots, v_s\}$, and edges $E=\{e_1, \dots, e_{s-1}\}$ such that $e_i=\{v_i, v_{i+1}\}$ for all $i=1, \dots, s-1$ together with
a finite set of colors $C = \{c_1,\dots ,c_q\}$, and a mapping $\xi: E\to C$ from edges of $P$ to colors such that $\xi(e_i)\neq \xi(e_{i+1})$ for each $i=1, \dots, s-1$. An instance $\tup{k, P, C, \xi}$ is a `yes'-instance if there exists a subset of edges $M\subseteq E$ with $|M|\ge k$ such that all edges in $M$ are pairwise disjoint and have different colors, and a `no'-instance otherwise. This problem is known to be NP-hard \cite{Le2014}.

Given an instance of {\sc Path Rainbow Matching} $\tup{ k, P, C, \xi}$, we construct an instance $\tup{N, \V, \E, F, (u_i)_{i\in N}, (\phi_{i, j})_{(i, j)\in F}}$ of our problem as follows. 

We set $N = \{1,\dots, 2q+s\}$: there are two agents corresponding to each color and $s$ additional dummy agents. We refer to agents $1, \dots, 2q$ as {\em color agents}.

The plot graph $(\V, \E)$ has $\V = \{v_1,\dots , v_{s}, w_1, \dots, w_{2q}\}$, and $\E = \{\{v_i,v_{i+1}\}: i=1,\dots,s-1 \}$, i.e., it is a copy of the given path instance together with $2q$ additional isolated plots (one for each color agent). 

The agents are friends if and only if they correspond to the same color and all friendships have weight $\phi=.1$: we set $F = \{ (2i-1,2i), (2i, 2i-1): 1\le i \le q\}$ and $\phi_{i, j} = .1$ for each $(i, j)\in F$. 

The agents' plot valuations are defined as follows. Dummy agents value all plots at $0$: $u_i(x)=0$ for each $i=2q+1, \dots, 2q+s$  and each $x\in \V$. Each color agent values `her' isolated plot at $1$ and all other isolated plots at $0$: for each $i=1, \dots, 2q$ we have $u_i(w_i)=1$, $u_i(w_j)=0$ for $j\neq i$. Also, for each edge $\{x, y\}$ with $\xi(\{x, y\})=c_i$ one agent in $\{2i-1, 2i\}$ values $x$ at $0$ and $y$ at $1$, and the other agent values $y$ at $0$ and $x$ at $1$. 
Specifically, for every color $c_i\in C$
let $\E_i= \{e\in \E: \xi(e)=c_i\}$, and suppose that  $\E_i=\{\{v_{i_1}, v_{j_1}\}, \dots, \{v_{i_r}, v_{j_r}\}\}$, where $i_1< i_2<\dots< i_r$ and $j_\ell=i_\ell+1$ for $\ell=1, \dots, r$. Then 
agent $2i-1$ values a plot $v_k\in\{v_1, \dots, v_s\}$ at $1$ if
$k=i_\ell$ for an odd value of $\ell$ or $k=j_\ell$ for an even value of $\ell$, and otherwise she values it at $0$.
Similarly, agent $2i$ values a plot $v_k\in\{v_1, \dots, v_s\}$ at $1$ if
if $v=i_\ell$ for an even value of $\ell$ or $v=j_\ell$ for an even value of $j$, and otherwise she values it at $0$.

We claim that our instance admits an assignment $\A$ with  $\SW(\A)\geq 2q+2k\phi$ if and only if $\tup{k, P, C, \xi}$ is a `yes'-instance of {\sc Path Rainbow Matching}. Indeed, suppose we start with a `yes'-instance of {\sc Path Rainbow Matching}, and let $M$ be the respective matching. For each edge $e=\{u, v\}\in M$, if $\xi(e)=c_i$, we assign color agents $2i-1$ and $2i$ to the endpoints of $e$ so that each of them is given the endpoint that she values at $1$; we assign the remaining color agents to their preferred isolated plots, while the dummy agents are matched arbitrarily to the remaining nodes. Then each color agent values her plot at $1$, and in addition there are $k$ pairs of friends who are allocated adjacent plots, so the overall social welfare is $2q+2k\phi$.

Conversely, suppose that there is an allocation $\A$ with $\SW(\A)\ge 2q+2k\phi$. Suppose first there exists some color agent $i$ with $U_i(\A)<1$. Then $\A(j)=w_i$ for some agent $j\in N\setminus\{i\}$, and $u_j(w_i)=0$, so we can swap $i$ and $j$ and increase the overall social welfare: even if $\A(i)$ is adjacent to the plot of $i$'s friend, the loss in social welfare caused by moving $i$ away from her friend is at most $2\phi<1$, and the gain in plot values is $1$. Thus, we can assume that in $\A$ each color agent values her plot at $1$. This means that there exist at least $k$ pairs of friends who are allocated adjacent plots, 
with each friend valuing her plot at $1$. Let $2i-1$, $2i$ be some such pair of friends, and suppose that they have been allocated plots $v_j, v_{j+1}$. Then either $\xi(\{v_j, v_{j+1}\})=c_i$ or 
$\xi(\{v_{j-1}, v_j\})=\xi(\{v_{j+1}, v_{j+2}\})=c_i$.
However, the latter case is impossible: we defined the plot valuation functions so that at least one of the agents $2i-1$ and $2i$ values both $v_j$ and $v_{j+1}$ at $0$ in this case.
Thus, these pairs correspond to a rainbow matching in $P$ of size at least $k$. 
 \end{proof}

On the positive side, if the friendship graph has maximum degree $1$, 
the problem of finding an allocation with maximum social welfare admits a poly-time $2$-approximation algorithm.

\begin{restatable}{theorem}{thmapproxoptimization}\label{thm:2-approx}
Given an instance $I$
where $\tup{N, F^*}$ has maximum degree 1, we can compute in polynomial time
an allocation $\A^*$ such that
$\SW(\A^*)\ge \frac12\OPT(I)$.
\end{restatable}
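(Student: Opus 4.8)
The plan is to \emph{decouple} the two sources of utility, optimize each separately, and return the better of the two allocations. Since $\tup{N, F^*}$ has maximum degree $1$, the friendship relation $F^*$ is a collection of vertex-disjoint pairs; writing $w_{\{i,j\}} = \phi_{i,j} + \phi_{j,i}$ for the combined bonus of a pair $\{i,j\}\in F^*$ (the indicator in \eqref{eq:utility} is symmetric in $i$ and $j$), the social welfare of any allocation $\A$ splits as
\begin{equation*}
\SW(\A) = \underbrace{\sum_{i\in N} u_i(\A(i))}_{=:V(\A)} \;+\; \underbrace{\sum_{\{i,j\}\in F^*} w_{\{i,j\}}\cdot \mathbb{I}\big(\{\A(i),\A(j)\}\in\E\big)}_{=:\Phi(\A)},
\end{equation*}
where $V(\A)\ge 0$ and $\Phi(\A)\ge 0$.

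First I would compute an allocation $\A_V$ that maximizes the valuation part $V$ in isolation: this is exactly a maximum-weight perfect matching in the complete bipartite graph between agents and plots with edge weights $u_i(v)$, i.e. the classical assignment problem, solvable in polynomial time.

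Next I would compute an allocation $\A_\Phi$ that maximizes the friendship part $\Phi$ in isolation. This is where the maximum-degree-$1$ hypothesis does the real work. Because the friend pairs are vertex-disjoint, a set $S\subseteq F^*$ of pairs can be \emph{simultaneously} seated on adjacent plots if and only if the plot graph $\G$ admits a matching of size $|S|$: assign each chosen pair to its own plot-graph edge, and place all remaining agents arbitrarily on the leftover plots (a valid bijection since $|N|=|\V|$). Hence, letting $\mu$ be the size of a maximum matching of $\G$, the optimal friendship welfare is realized by satisfying the $\min(|F^*|,\mu)$ heaviest pairs, which is computed by finding a maximum matching of $\G$ (polynomial time) and greedily assigning the largest-weight friend pairs to its edges.

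Finally, the guarantee follows from a short counting argument. Let $\A^{\mathrm{opt}}$ be any welfare-optimal allocation. Since $\A_V$ and $\A_\Phi$ maximize $V$ and $\Phi$ respectively and both terms are non-negative,
\begin{align*}
\SW(\A_V) + \SW(\A_\Phi)
&\ge V(\A_V) + \Phi(\A_\Phi) \\
&\ge V(\A^{\mathrm{opt}}) + \Phi(\A^{\mathrm{opt}}) = \OPT(I).
\end{align*}
Taking $\A^*$ to be whichever of $\A_V,\A_\Phi$ has larger social welfare yields $\SW(\A^*)\ge \tfrac12\OPT(I)$. I expect the main obstacle to be the middle step, namely verifying that the friendship-only objective is polynomially optimizable: one must argue carefully that \emph{every} subset of at most $\mu$ friend pairs is jointly realizable (so that greedily taking the heaviest pairs is optimal), and it is precisely the degree-$1$ hypothesis that rules out conflicts between overlapping pairs and reduces the task to ordinary matching in $\G$.
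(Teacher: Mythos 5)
Your proposal is correct and takes essentially the same approach as the paper: the paper also computes one allocation maximizing plot values via a maximum-weight bipartite matching, a second allocation seating the heaviest friend pairs on the edges of a maximum matching of the plot graph (valid exactly because the degree-$1$ friendship structure makes every set of at most $\mu$ pairs jointly realizable), and returns the better of the two, concluding via the same additive decomposition $\SW(\A)\le \SW(\A_1)+\SW(\A_2)$. No gaps.
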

\begin{proof}
Our algorithm proceeds as follows. 

First, we need to find a maximum matching in the graph $\tup{\V, \E}$; let $\{\{v_1, w_1\}, \dots, \{v_s, w_s\}\}$ be the set of edges of this matching. Suppose that $F^*=\{\{i_1, j_1\}, \dots, \{i_t, j_t\}\}$, where 
$\phi_{i_\ell, j_\ell}+\phi_{j_\ell, i_\ell} \ge 
\phi_{i_r, j_r}+\phi_{j_r, i_r}
$
whenever $1\le \ell<r\le t$, i.e., the edges in $F^*$ are sorted by the total friendship weight in non-increasing order.
Then for each $k=1, \dots, \min\{t, s\}$ we allocate plot $v_k$ to $i_k$ and plot $w_k$ to $j_k$; all remaining plots are matched arbitrarily to the remaining agents. Let the resulting allocation be $\A_1$.

Second, we consider the weighted complete bipartite graph with parts $N$ and $\V$
where the weight of an edge $\{i, v\}\in N\times \V$ is equal to $u_i(v)$, and compute an allocation that corresponds to a maximum-weight matching in this graph; let this allocation be $\A_2$.

We output the better of the two allocations $\A_1$ and $\A_2$ (breaking ties arbitrarily). To see that this algorithm provides a $\frac12$-approxi\-ma\-tion, consider an arbitrary allocation $\A$. Under this allocation, at most $\min\{s, t\}$ pairs of friends are allocated adjacent plots, so the total utility they derive from friendship is at most $\SW(\A_1)$. Moreover, the total value that the agents assign to their plots under $\A$ is at most $\SW(\A_2)$. Thus, we have $\SW(\A)\le \SW(\A_1)+\SW(\A_2)$, whereas we output an allocation whose social welfare is at least $\frac12(\SW(\A_1)+\SW(\A_2))$. Moreover, both $\A_1$ and $\A_2$ can be computed in polynomial time, which concludes the proof.
\end{proof}

To summarize, for friendship graphs of maximum degree 1,
{\sc SW-Opt} is NP-hard, but admits a simple $2$-approximation algorithm. In the remainder of the paper, 
we restrict ourselves to friendship graphs of maximum degree 1, and ask if this constraint allows us to find good allocations even when agents' plot values and/or friendships are not publicly known.

\section{Plot Allocation Mechanisms}\label{sec:truth}

In this section we adopt a {\em mechanism design} perspective. That is, we are interested in deterministic/randomized mechanisms (without money) that elicit valuations and friendships, and output an allocation based on the reports; these mechanisms should be simple to describe and participate in, and produce good allocations, even when agents are strategic. 

We consider mechanisms where agents pick plots directly: the simplest such mechanism is the {\sc Serial Dictatorship} mechanism, where agents sequentially pick plots. 
Given that agents' utilities may depend on what other agents pick in subsequent rounds, we require that agents should be able to efficiently compute optimal strategies.
In addition, we require agents to report friendships. These reports are used to select the picking order, and to possibly restrict agents' plot choices.
Such mechanisms are easy to describe, making the allocation procedure transparent --- an important concern in our setting. 

Formally, we say that a deterministic mechanism is {\em friendship-truthful (FT)} if no agent can increase her utility by misreporting friendship information no matter what other agents report and no matter which plots they pick. A randomized mechanism is {\em universally FT} if it is friendship-truthful for every choice of its random bits, even when agents know the random bits used by the mechanism. 
A deterministic mechanism is {\em Pareto optimal (PO)} if it is guaranteed to output a PO allocation on every input; a randomized mechanism is  {\em universally PO} if it outputs a PO allocation on every input and for every choice of its random bits.

We are now ready to discuss mechanisms for land allocation with friends. We begin with serial dictatorship, 
identify its shortcomings, and explore several ways to overcome them. We derive a mechanism that is universally friendship-truthful, poly-time computable, and universally PO.

\subsection{Serial Dictatorship}\label{sec:rsd}
A natural starting point in our analysis is the {\sc (Random) Serial Dictatorship} (\rsd) mechanism \cite{abdulkadiroglu1998rsd,brandl2016rsd}. 
In the deterministic version of this mechanism, agents sequentially pick the plots, in a predetermined order; in the randomized version, agent order is chosen uniformly at random. 
For the one-sided matching problem, which is a special case of our problem, the optimal strategy of every agent under the \sd mechanism is simple: she should simply choose the best available plot. 
Moreover, for one-sided matching the (R)SD mechanism is (universally) PO as long as all agents have generic utilities. Its performance with respect to the social welfare is well-understood; in particular, for binary utilities, a variant of this mechanism offers a constant-factor approximation to the optimal social welfare \cite{adamczyk2014efficiency}. 

However, in the presence of friendships the agents' decision problem under \rsd becomes much more complicated, as illustrated by the following example.

\begin{example}\label{ex:sd}
Consider an instance with agents $1$, $2$, $3$, $4$, and plots $v_1, v_2, v_3, v_4$, arranged on a path. 
Let $F^*=\{\{1, 4\}, \{2, 3\}\}$. Suppose that agents' values for the plots are given by the table below and $\phi_{i, j}=.4$ for all $(i, j)\in F$. Consider what happens when we run the SD mechanism on this instance, with agent order $(1, 2, 3, 4)$.



\begin{center}
\renewcommand{\arraystretch}{0.9}
\small
  \begin{tabular}{  l  c  c  c  c }
    \toprule
            & $v_1$ & $v_2$ & $v_3$ & $v_4$ \\ 
    \midrule
    agent 1 & .5    & .3    &  0    &  0    \\ 
   
    agent 2 & 0     & .5    & .3    &  0    \\ 
    
    agent 3 & 0     & .7    &  0    & .5    \\ 
   
    agent 4 & 0     & .5    &  0    &  0    \\
    \bottomrule
  \end{tabular}

\end{center}


Agent 1 picks first. If he were to choose $v_1$, agent 2 would face the choice between $v_2$ and $v_3$ ($v_4$ is obviously less attractive). While she prefers $v_2$, she realizes that if she were to choose $v_2$, agent 3, who is her friend, would choose the non-adjacent plot $v_4$, so agent 2's utility would be $.5$. If agent 2 chooses $v_3$, agent 3 would pick $v_2$, so agent 2's utility would be $.3+\phi_{2, 3} = .3+.4=.7$. Therefore, agent 2 picks $v_3$; agent 3 picks $v_2$ next, and finally agent 4 picks $v_4$. Under this scenario, agent 1 ends up several plots away from his friend, so his utility is $.5$.

Now, suppose that agent 1 chooses $v_2$ instead. While $u_1(v_2)=.3< u_1(v_1)$, in this case agent 2 would pick $v_3$, agent 3 would pick $v_4$, and agent 4 ends up with $v_1$, i.e., right next to agent 1. 
Thus, agent 1's total utility from choosing $v_2$ is $.7$. As his utility from choosing $v_3$ or $v_4$ is at most $.4$, his best choice is $v_2$, and the resulting allocation $\A$ is given by $\A(1)=v_2$, $\A(2)=v_3$, $\A(3)=v_4$, $\A(4)=v_1$.
\end{example}

Example~\ref{ex:sd} illustrates interesting phenomena that arise when using the SD mechanism.
First, when deciding, agent 1 must reason about the decisions of agents who pick their plots after him. 
To choose optimally, he must know agents' plot values and friendships: indeed, if agents 3 and 4 had a low value for $v_2$ and high values for $v_3$ and $v_4$, he could safely pick $v_1$, as $v_2$ would remain available for his friend. 

Second, agent 2's decision depends on the order of agents who pick after him. If agent 4 picked immediately after agent 1, then agent 1 could safely pick $v_1$ and expect agent $2$ to pick the adjacent plot $v_2$.
Consequently, his decision is even more difficult if the order of agents is unknown. In particular, if agent order is chosen uniformly at random (i.e., using \rsd with hidden agent order), then, to evaluate the expected utility for each selection, he must consider all $3!=6$ scenarios corresponding to the permutations of the other agents.

Third, on this instance the SD mechanism produced an allocation that is not Pareto optimal: agents $1$ and $2$ would benefit from swapping their plots.

Thus, SD fails most of our criteria for a good mechanism. While it is simple to describe, the agents' decision problem is far from simple (in fact, the best upper bound on its computational complexity we could obtain is PSPACE). Further, agents must reason about other agents' utilities as well as their own, and the outcome may fail to be PO.

These difficulties mainly stem from the fact that whenever an agent $i$ has a friend $j$ that comes after her in the permutation, $i$ must predict $j$'s decision. More specifically, for each available plot, $i$ needs to know whether $j$ can and will pick an adjacent plot on her turn. 
Clearly, this task is much easier when $j$'s turn follows immediately after $i$: indeed, in our example, agent 2 had a much easier time making up her mind than agent 1. Thus, we will now explore variants of the SD mechanism that enable friends to choose consecutively.


\subsection{Choose-Together-SD (CT-SD) Mechanisms}\label{sec:ctsd}
Following the argument outlined at the end of Section~\ref{sec:rsd}, 
we consider a variant of \rsd where if $i$ and $j$ are friends, they appear consecutively in the permutation.

{\sc Online Choose-Together \rsd (On-CT-RSD)} is our first implementation of this idea: at each step, the mechanism picks one of the unallocated agents uniformly at random. The agent then picks a plot and may declare another unallocated agent as her friend; if so, then her friend is the next to choose a plot (but cannot declare another friend). Let $\V_i$ denote the set of plots that agent $i$ can select from on her turn. We say that a plot $v$ is a {\em singleton plot in $\V_i$} if it is not adjacent to any other plot in $\V_i$.
If an agent $i$ has a friend who has not selected a plot yet, and $i$ selects a singleton plot in $\V_i$, then she will not be placed next to her friend in the resulting allocation.

Suppose first that the friendship information is publicly available, i.e., agent $i$ can declare agent $j$ to be her friend if and only if $(i, j)\in F$. In this case, under \OnCTRSD the agents can compute their strategies in polynomial time.

\begin{restatable}{theorem}{thmONCTRSDPloy}
Suppose that agents cannot misreport friendship information. Then each agent can compute her optimal strategy in polynomial time.
To compute her strategy, each agent only needs to know her preferences and the preferences of her friend (if she has one).
\end{restatable}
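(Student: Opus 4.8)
The plan is to fix an arbitrary agent $i$ and analyze her decision at the moment she is selected, by cases on the state of the allocation, showing that in every case her optimal move is the solution of a simple optimization over the currently available plots $\V_i$ that references only $u_i$ and (at most) the preferences of her unique friend. Since $\tup{N,F^*}$ has maximum degree $1$, the only other agent whose preferences can ever matter to $i$ is this single friend, so there is no recursion beyond one level.

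First the two easy cases. If $i$ has no friend, her utility is exactly $u_i(\A(i))$, so her unique optimal move is to take $\arg\max_{v\in\V_i}u_i(v)$, computed in $O(|\V_i|)$ time from $u_i$ alone. If $i$ has a friend $j$ who has already been allocated a plot $w$, then adjacency to $w$ is fixed and $i$'s utility from taking $v$ is the deterministic quantity $u_i(v)+\phi_{i,j}\mathbb{I}(\{v,w\}\in\E)$; she maximizes this over $v\in\V_i$, again in linear time using only $u_i$, $\phi_{i,j}$, and the observed plot $w$. In neither case does a declaration affect $i$'s own utility.

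The substantive case is when $i$ has a friend $j$ who has not yet picked. Here the plan is to argue that it is weakly optimal for $i$ to declare $j$, so that $j$ chooses immediately afterward. The value of declaring is that it reduces $j$'s problem, from $i$'s point of view, to the second easy case: once $i$ has taken a plot $v$ and declared $j$, agent $j$ will take $\mathrm{BR}_j(v):=\arg\max_{w\in\V_i\setminus\{v\}}\bigl[u_j(w)+\phi_{j,i}\mathbb{I}(\{v,w\}\in\E)\bigr]$, and (as $j$ has no other friend) this best response is determined entirely by $u_j$, $\phi_{j,i}$, and the set $\V_i\setminus\{v\}$ that $i$ herself observes. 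Hence $i$ can predict, for each candidate $v$, whether $j$ will be seated next to her, evaluate the deterministic utility $u_i(v)+\phi_{i,j}\mathbb{I}(\{v,\mathrm{BR}_j(v)\}\in\E)$, and take the maximizing $v$ in time $O(|\V_i|^2)$. Because $i$ and $j$ are both placed after this step, their utilities are fixed and the choices of the remaining agents are irrelevant; so $i$ needs to know only $u_i$ and $u_j$ (with $\phi_{i,j},\phi_{j,i}$), as claimed.

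The main obstacle is showing that declaring really is weakly optimal, i.e. that $i$ cannot do better by declining to declare and relying on the random continuation to seat $j$ beside her. I would prove this by a dominance argument: any non-declaring move at a plot $v$ yields $i$ the value $u_i(v)+\phi_{i,j}\cdot\Pr[\,j\text{ ends up adjacent to }v\,]$, and I would bound this by the value of a suitable declaring move. When $\mathrm{BR}_j(v)$ is adjacent to $v$, declaring the same $v$ already secures the full bonus $\phi_{i,j}$ deterministically and thus dominates. The delicate subcase is when $j$'s best response on the \emph{full} set $\V_i\setminus\{v\}$ is non-adjacent: then with positive probability (namely when $j$ is picked first among the remaining agents) $j$ still goes non-adjacent, so $\Pr[\,j\text{ adjacent}\,]<1$, and any residual chance of adjacency arises only from $j$'s preferred non-adjacent plots being removed by others before $j$'s turn. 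I would handle this subcase with an exchange argument comparing against the declaring move in which $i$ either preempts $j$'s preferred plot herself so as to force $j$'s best response adjacent, or falls back on her value-maximizing plot, and show no net gain over $\max_{v}\bigl[u_i(v)+\phi_{i,j}\mathbb{I}(\{v,\mathrm{BR}_j(v)\}\in\E)\bigr]$ is possible. Establishing this reduction of the friend's behavior to a transparent best response, and ruling out the no-declare deviation, is the crux; given it, polynomial running time and the stated informational requirement follow immediately.
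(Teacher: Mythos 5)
Your case analysis for the plot choice itself (no friend; friend already placed; friend picks next, predicted via her best response over $\V_i\setminus\{v\}$) is exactly the paper's argument. The gap is in the extra step you add: you treat the friendship declaration as a strategic option and rest the whole proof on the lemma that declaring $j$ is weakly optimal. That lemma is false, so no exchange argument will close the "delicate subcase." Concretely: take plots $v_1,v_2,v_3$ with $\E=\{\{v_1,v_2\}\}$, agents $1,2,3$ with $F^*=\{\{1,2\}\}$, $\phi_{1,2}=\phi_{2,1}=0.1$, and values $u_1(v_1)=1$, $u_2(v_3)=0.6$, $u_2(v_2)=0.4$, $u_3(v_3)=1$, all other values $0$. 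Suppose agent $1$ is drawn first and takes $v_1$. If she declares agent $2$, then agent $2$ immediately takes $v_3$ (since $0.6>0.4+0.1$), and agent $1$ ends with utility $1$. If instead she stays silent, then with probability $\tfrac12$ agent $3$ is drawn next and takes $v_3$, forcing agent $2$ onto the adjacent plot $v_2$; agent $1$'s expected utility is $1+\tfrac12\cdot 0.1=1.05>1$. So withholding strictly beats declaring: declaring lets $j$ choose while her preferred non-adjacent plot is still available, whereas waiting gives other agents a chance to remove it. Your fallback move of "preempting $j$'s preferred plot" also fails here, since taking $v_3$ gives agent $1$ utility $0$.

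The deeper problem is that under your reading (declaration optional but truthful) the theorem itself breaks: to evaluate the no-declare option, agent $1$ above must know agent $3$'s preferences, contradicting the claim that she only needs her own and her friend's; and evaluating random continuations is exactly the hard reasoning the theorem is meant to rule out. The paper avoids all of this by interpreting the hypothesis "agents cannot misreport friendship information" as removing the declaration from the strategy space altogether --- its proof opens with the observation that the agents' "strategic decisions are limited to what plot to pick" --- after which only your three plot-choice cases remain and the result follows in a few lines. You should adopt that reading (and delete the dominance step) rather than try to prove a claim that is false.
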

\begin{proof}
Since agents cannot misreport friendships, their strategic decisions are limited to what plot to pick. Consider an agent $i\in N$. If $i$ has no friends, she should simply pick the plot with the highest value among the available plots. Now, suppose that $(i, j)\in F$. If $i$ picks after $j$, she can choose the plot that maximizes her utility, taking her friend's (known) location into account. Finally, if $i$ picks before $j$, she can consider all available plots, and, for each option, check whether $j$ would choose one of the adjacent plots at the next step; in this way, she can determine which plot would maximize her utility. Note that to make her decision, $i$ does not have to reason about the utilities of agents in $N\setminus\{i, j\}$.
\end{proof}

A further appealing feature of \OnCTRSD is that it is {\em ordinal}, in the sense that agents make their choice based on comparing plot values (accounting for additional value if a friend will be adjacent).
However, even if friendships are public, \OnCTRSD allocations are not necessarily PO; in fact, as Example \ref{ex:onctrsd-po} shows, \OnCTRSD may output an allocation that is dominated by a better allocation.

\begin{example}\label{ex:onctrsd-po}
\OnCTRSD may produce an allocation $\A$ where another allocation $\A'$ exists with $U_i(\A')>U_i(\A)$ for all $i\in N$, even if agents cannot misreport friendships.

Consider an instance with agents $1,2,3$ and plots $v_1, v_2, v_3$, where $\E=\{\{v_2, v_3\}\}$. Let $F^*=\{\{1, 2\}\}$. Agents' plot valuations are shown below, and $\phi_{1, 2}=\phi_{2, 1} = .5$. Suppose that the \OnCTRSD mechanism picks agent $1$ first, so the order is $(1, 2, 3)$.


\begin{center}
\renewcommand{\arraystretch}{0.9}
\small
  \begin{tabular}{  l  c  c  c  }
    \toprule
            & $v_1$ & $v_2$ & $v_3$\\
    \midrule
    agent 1 & 1     & .9    & 0    \\ 
    agent 2 & 1     & 0     & .4   \\ 
    agent 3 & 1     & .1    & 0    \\ 
    \bottomrule
  \end{tabular}
\end{center}


Agent 1 can guarantee herself a utility of $1$ by picking $v_1$. Her utility can be improved if she picked $v_2$ and her friend, agent 2, cooperates by picking $v_3$. However, agent $2$ would prefer $v_1$ if it is available. Hence, the plot $v_1$ remains agent 1's best choice, and the mechanism produces the allocation 
$\A(1)=v_1$, $\A(2)=v_3$, $\A(3)=v_2$. Now, an allocation $\A'$ given by $\A'(1)=v_2$, $\A'(2)=v_3$, $\A'(3)=v_1$ dominates $\A$ with $U_i(\A')>U_i(\A)$ for all $i\in N$.
\end{example}

Example~\ref{ex:onctrsd-po} fails to produce a PO allocation: agent 2 does not choose a plot adjacent to her friend's because she gains more from choosing $v_1$ over $v_3$ than she gains from friendship.
Indeed, if we change $\phi$ from $.5$ to $1$, \OnCTRSD produces a PO allocation. 
This observation can be generalized.

\begin{restatable}{theorem}{thmPOONCTRSD}\label{thm:PO-On-CT-RSD}
\OnCTRSD is universally PO on generic instances $\tup{N, \V, \E, F, (u_i)_{i\in N}, (\phi_{i, j})_{(i, j)\in F}}$ with $\phi_{\min} >1$.
\end{restatable}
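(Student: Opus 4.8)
The plan is to show that, under the optimal play guaranteed by the preceding strategy theorem, every pick made by \OnCTRSD is \emph{locally utility-maximal}, and then run the standard ``leading agent'' induction used to prove serial dictatorship Pareto optimal, adapted to the block structure that \OnCTRSD induces. The first and crucial step is a structural observation driven by $\phi_{\min}>1$: since $u_i(v)\le 1<\phi_{\min}\le \phi_{i,j}$ for every plot $v$ and every edge $(i,j)\in F$, a friendship bonus strictly outweighs any difference in plot values. Consider the moment the first agent $i$ of a friend pair $\{i,j\}$ is selected, with $\V_i$ the currently available plots. If $\V_i$ contains a non-singleton plot, then $i$ declares $j$ and picks the highest-valued non-singleton plot $v^{*}$; her friend $j$, choosing next from $\V_i\setminus\{v^{*}\}$ and also enjoying $\phi_{j,i}>1$, takes the best plot adjacent to $v^{*}$ (which exists, as $v^{*}$ is non-singleton in $\V_i$), so both are placed adjacently. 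Because any non-singleton choice secures the bonus while any singleton choice forgoes it, and $u_i(v^{*})+\phi_{i,j}\ge \phi_{i,j}>1\ge u_i(w)$ for every singleton $w$, this attains the maximum of $U_i$ over \emph{all} allocations that agree with the plots already assigned, and generically (injectivity of $u_i$) it is the unique maximizer. If instead every plot in $\V_i$ is a singleton, no agent yet to move can be adjacent to $i$ or $j$, all bonuses for them vanish, and $i$ then $j$ simply take their best available plots; a friendless agent forms a singleton block and does likewise.

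With this in hand I would induct over the pick-blocks $B_1,\dots,B_m$ in processing order, writing $P_k=B_1\cup\dots\cup B_k$. Suppose for contradiction that some $\A'$ dominates the output $\A$. The inductive hypothesis is $\A'(a)=\A(a)$ for all $a\in P_{k-1}$; this forces $\A$ and $\A'$ to use the same plots on $P_{k-1}$, hence to leave the same set $R=\V\setminus \A(P_{k-1})$ of plots (and the same remaining agents) for the later blocks, so $\A'$ is consistent with the mechanism's state at $B_k$. For a singleton block $B_k=\{i\}$, $\A'(i)\in R$ gives $U_i(\A')=u_i(\A'(i))\le \max_{v\in R}u_i(v)=U_i(\A)$, and domination forces equality, hence $\A'(i)=\A(i)$ by genericity. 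For a pair block $B_k=\{i,j\}$, the structural step shows $U_i(\A)$ equals the maximum of $U_i$ over all allocations consistent with $P_{k-1}$; since $\A'$ is such an allocation, $U_i(\A')\le U_i(\A)$, so domination yields $U_i(\A')=U_i(\A)$ and (genericity) $\A'(i)=\A(i)$ with $i$ adjacent to $j$ in $\A'$. Re-running the same argument for $j$ against the now-fixed plot of $i$ pins $\A'(j)=\A(j)$; the degenerate all-singleton case for a pair block is handled exactly like the friendless case, one agent after the other. Thus $\A'=\A$ on all of $N$, contradicting that $\A'$ strictly improves some agent, so $\A$ is Pareto optimal. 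Since the argument uses nothing about the coins beyond the resulting block order, it holds for every realization, giving \emph{universal} Pareto optimality.

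I expect the main obstacle to be the structural step, not the induction. One must argue carefully that $\phi_{\min}>1$ makes adjacency not merely individually preferred for the \emph{first} mover of a pair, but that her guaranteed utility $u_i(v^{*})+\phi_{i,j}$ actually coincides with the constrained global optimum of $U_i$ --- in particular that no consistent allocation can do better by trading plot value for adjacency, and that the second mover's best response is forced to realize this adjacency. The genericity assumption does real work here, supplying the strict inequalities and unique argmaxes that upgrade ``weakly optimal'' choices to uniquely optimal ones; without it the induction could stall on ties. Once the structural step is in place, the remainder is the familiar serial-dictatorship Pareto-optimality induction, adapted only to resolve each friend pair as a single two-step block.
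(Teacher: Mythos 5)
Your proof is correct and takes essentially the same route as the paper's: both hinge on the structural fact that with $\phi_{\min}>1$ the first mover of a friend pair optimally takes the best non-singleton available plot and her friend then takes an adjacent one (making her utility the maximum over all allocations consistent with earlier picks), and both use genericity to turn weak domination into plot-by-plot agreement with the mechanism's output on the preceding picks. The only difference is presentational: you run an explicit block-by-block induction showing $\A'=\A$, whereas the paper localizes the contradiction at the first agent whose utility strictly improves.
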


\begin{proof}
Suppose for the sake of contradiction that, given an instance of our problem with $\phi_{\min}>1$, \OnCTRSD produces an allocation $\A$, yet there exist another allocation $\A'$ for this instance such that $U_\ell(\A')\ge U_\ell(\A)$ for all $\ell\in N$ and $U_i(\A')>U_i(\A)$ for some $i\in N$. We can assume without loss of generality that under \OnCTRSD the picking order is $(1, 2, \dots, n)$. Let $i$ be the first agent in this order such that $U_i(\A')>U_i(\A)$; note that, since our instance is generic, this means that $\A(\ell)=\A'(\ell)$ for all $\ell<i$ and hence $\A'(i)\in\V_i$. 

Suppose first that $i$ has no friends. Then under \OnCTRSD she picks the most valuable plot in $\V_i$ and $\A'(i)\in\V_i$, so we have $U_i(\A) = u_i(\A(i))\ge u_i(\A'(i)) = U_i(\A')$, a contradiction. Thus, we can assume that $(i, j)\in F$ for some $j\in N$.

Now, suppose that in our run of \OnCTRSD agent $i$ picks her plot after $j$, and hence $\A'(j)=\A(j)$. Then the utility that $i$ would obtain by picking $\A'(i)$ in the execution of \OnCTRSD is equal to the utility she obtains in $\A'$; since $\A'(i)\in\V_i$, we obtain a contradiction again.

It remains to consider the case where in our run of \OnCTRSD agent $i$ chooses before agent $j$ (and then $j$ chooses next). 
Then $i$'s best strategy is to pick the highest-value non-singleton plot in $\V_i$ (and to simply pick the highest-value plot if all plots in $\V_i$ are singletons). Indeed, if $i$ picks a non-singleton plot in $\V_i$, since $\phi_{j, i}>1$, agent $j$ would necessarily pick an adjacent plot in the next step, and, since $\phi_{i, j}>1$, agent $i$ would derive a higher utility from this choice than from any singleton plot in $\V_i$.

Suppose first that all plots in $\V_i$ are singletons, and hence under \OnCTRSD agent $i$ picks the highest-value singleton plot in $\V_i$. Then $\A'(i)$, too, is a singleton plot in $\V_i$, 
i.e., in $\A$ all plots adjacent to $\A'(i)$ are occupied by agents who appear before $i$ in the picking order, and we know that these agents are allocated the same plots in $\A'$. Thus, $i$ and $j$ are not allocated adjacent plots in $\A'$, and hence $U_i(\A')=u_i(\A'(i))\le u_i(\A(i))=U_i(\A)$, where the inequality holds since under \OnCTRSD agent $i$ picks the highest-value singleton plot in $\V_i$.
Thus, we obtain a contradiction in this case.

To complete the proof, suppose that $i$ picks a non-singleton plot in $\V_i$ under \OnCTRSD, and therefore $U_i(\A)>1$. Then it has to be the case that $U_i(\A')>1$, i.e., in $\A'$ agents $i$ and $j$ are allocated adjacent plots. Thus, $\A'(i)$ is a non-singleton plot in $\V_i$, but then we obtain a contradiction again, since $\A(i)$ is the highest-value non-singleton plot in $\V_i$, so $U_i(\A)=u_i(\A(i))+\phi_{i, j}\ge u_i(\A'(i))+\phi_{i, j}=U_i(\A')$. 
\end{proof}

So far we have assumed that agents cannot misreport their friendships. Let us now examine the role of this assumption.

\begin{proposition}
\OnCTRSD is not universally FT. 
\end{proposition}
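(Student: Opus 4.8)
The plan is to refute universal friendship-truthfulness by exhibiting a single small instance together with a fixed realization of the mechanism's random bits on which one agent strictly gains by misreporting her friendships; since universal FT demands friendship-truthfulness for \emph{every} realization of the random bits, one such witness suffices. I would take three agents $1,2,3$ and three plots $v_1,v_2,v_3$ arranged on a path, so that $\E=\{\{v_1,v_2\},\{v_2,v_3\}\}$, with a single true friendship $F^*=\{\{1,2\}\}$ of small weight $\phi_{1,2}=\phi_{2,1}=\phi$ (say $\phi=\tfrac{1}{10}$), and agent $3$ having no friend. The valuations are chosen so that (i) $v_1$ is agent $1$'s uniquely best plot, with $u_1(v_1)=1$ and $u_1(v_2),u_1(v_3)$ small; (ii) agent $2$ strictly prefers the non-adjacent plot $v_3$ to the neighbouring plot $v_2$ even after adding the friendship bonus, e.g.\ $u_2(v_3)=\tfrac12>u_2(v_2)+\phi$; and (iii) $v_3$ is agent $3$'s uniquely best plot, $u_3(v_3)=1>u_3(v_2)$.

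Next I would fix the random bits so that agent $1$ is selected to pick first. Under truthful reporting, agent $1$ declares her true friend $2$, so the order is $(1,2,3)$: agent $1$ takes $v_1$, and agent $2$, choosing immediately afterwards from $\{v_2,v_3\}$, prefers $v_3$ by construction (ii), leaving agent $1$ with no adjacent friend. Hence her truthful payoff is $u_1(v_1)=1$. Here I would also verify the two alternative first picks of agent $1$: choosing $v_2$ guarantees adjacency (both remaining plots neighbour $v_2$) but yields only $u_1(v_2)+\phi<1$, and choosing $v_3$ yields at most $u_1(v_3)+\phi<1$; thus $1$ is genuinely agent $1$'s best \emph{truthful} payoff.

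I would then exhibit the profitable misreport: agent $1$ \emph{hides} her friendship with $2$, declaring instead the non-friend $3$ (a friendship misreport), which forces agent $3$ to move next regardless of the remaining random bits. The order becomes $(1,3,2)$: agent $1$ takes $v_1$, agent $3$ grabs her favourite $v_3$, and agent $2$ is left with $v_2$, which neighbours $v_1$. Because utilities are computed from the \emph{true} friendship graph $F$ and not from the reports, agent $1$'s realised utility is $u_1(v_1)+\phi_{1,2}=1+\phi>1$, a strict improvement. This shows \OnCTRSD is not universally FT.

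The point I would argue most carefully — and the main subtlety — is matching the manipulation to the precise definition of FT: I must hold the other agents' strategies (``pick the best available plot, accounting for one's own friend'') fixed and compare agent $1$'s truthful and untruthful reports under the \emph{same} fixed random bits, not across two freshly drawn random orders. The manipulation works exactly because delaying the friend $2$ past agent $3$ removes her preferred plot $v_3$ from her menu and forces her onto the plot adjacent to $1$; the crux is to confirm that this strictly helps the manipulator and relies on no irrational play by $2$ or $3$. I would also note that the same effect can be obtained by simply declaring \emph{no} friend, provided the fixed random bits place $3$ before $2$, so the choice of misreport is robust.
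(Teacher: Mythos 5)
Your proof is correct and takes essentially the same approach as the paper: the paper also exhibits a three-agent instance with one true friendship, fixes agent 1 to pick first, and shows she strictly gains by falsely declaring the friendless agent 3 as her friend, thereby jumping agent 3 ahead in the queue and squeezing her true friend 2 onto a plot adjacent to her own. The only differences are cosmetic (the paper reuses its earlier Example with an isolated plot and lets the manipulator sacrifice some plot value, whereas your path instance lets her keep her top plot), and your explicit care about holding the others' strategies and the random bits fixed matches the paper's intended reading of universal friendship-truthfulness.
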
\label{prop:on-ct-rsd-not-friend}
\begin{proof}
Let us revisit Example~\ref{ex:onctrsd-po}. 
Suppose again that agent 1 is the first in the picking order. We argued that if agent 1 declares agent 2 as her friend, she maximizes her utility by picking the plot $v_1$, resulting in a total utility of $1$. Suppose, however, that agent 1 picks plot $v_2$ and declares agent 3 to be her friend. Then agent 3 chooses next, and picks the plot $v_1$. Agent 2 is then forced to pick plot $v_3$, so that the total utility of agent 1 is $u_1(v_2)+\phi_{1, 2}=1.4>1$. Thus, agent 1 benefits from misreporting friendship information.
\end{proof}

However, as is the case for PO, if $\phi_{\min}>1$, this negative result no longer holds.

\begin{restatable}{theorem}{thmOnCTRSDTruth}\label{thm:OnCTRSD-ft}
\OnCTRSD is universally friendship-truthful for every instance with $\phi_{\min}>1$.
\end{restatable}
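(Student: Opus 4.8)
The plan is to fix the mechanism's random bits (so that the realized picking order, up to friend declarations, is determined) and an arbitrary rational profile of the other agents, and to show that for each agent $i$ truthfully declaring her friendship is a best response among all friendship reports. The first observation I would record is that a friendship declaration made by $i$ only governs \emph{who picks after her}, so the menu $\V_i$ of plots available at $i$'s own turn is identical under every report; hence $i$ can never enlarge her own set of options by misreporting. I would then dispose of the easy cases immediately: if $i$ has no friend, or if her friend $j$ has already been allocated before $i$'s turn, then $U_i(\A)$ depends only on the plot $i$ herself takes (and, in the second case, on $j$'s already-fixed position), so declaring a spurious friend merely reshuffles later picks and cannot change $i$'s utility.

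The heart of the argument is the remaining case $(i,j)\in F$ with $j$ still unallocated at $i$'s turn, where I would prove matching upper and lower bounds on $i$'s attainable utility. For the upper bound: whatever $i$ reports, if she takes a plot $v\in\V_i$ her utility is at most $u_i(v)+\phi_{i,j}$; moreover, if $v$ is a \emph{singleton} in $\V_i$ then every plot-graph neighbor of $v$ is already occupied by an agent who picked before $i$ (and so is not $j$), whence $i$ can never be adjacent to $j$ and her utility is exactly $u_i(v)$. This bound holds for every report and every behavior of the others. For the lower bound I would invoke $\phi_{\min}>1$: if $i$ truthfully declares $j$ and picks a \emph{non-singleton} $v$, then $j$ picks immediately next with a neighbor of $v$ still free, and since the bonus $\phi_{j,i}>1$ strictly exceeds the largest possible gap between two plot values (both lie in $[0,1]$), occupying a neighbor of $v$ is $j$'s strictly dominant choice, so $i$ is guaranteed $u_i(v)+\phi_{i,j}$. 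Combining the two, the best value $i$ can obtain under \emph{any} (possibly misreported) strategy is $\max\bigl\{\max_{v\in\V_i,\, v\text{ non-singleton}}(u_i(v)+\phi_{i,j}),\ \max_{v\in\V_i,\, v\text{ singleton}}u_i(v)\bigr\}$, and this value is already achieved by truthful reporting; hence no misreport can do strictly better.

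I expect the main obstacle to be the lower-bound step, namely arguing rigorously that declaring $j$ truthfully lets $i$ \emph{force} the friendship bonus. This needs two things: (a) confirming that a neighbor of $v$ available at $i$'s turn survives to $j$'s immediately following turn, which holds because no agent acts between $i$ and $j$; and (b) pinning down $j$'s response, i.e.\ showing that $\phi_{j,i}>1$ makes being adjacent to $i$ strictly optimal for $j$ regardless of $j$'s own valuations, so any rational $j$ complies (and any tie among several neighbors still leaves $i$ adjacent). A secondary technical point is to state precisely what ``misreporting friendship'' ranges over --- declaring no friend, or declaring a non-friend --- and to isolate the order-only effect of such a declaration from $i$'s plot choice; the universal quantifier over the random bits and over the others' picks then costs nothing, since both the upper bound and the singleton argument are manifestly independent of those choices.
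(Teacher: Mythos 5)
Your proposal is correct and follows essentially the same route as the paper's proof: dispose of the friendless/already-placed-friend cases, then in the main case match an upper bound on $i$'s attainable utility (at most $u_i(v)+\phi_{i,j}$ for a non-singleton pick, and no adjacency possible from a singleton pick) with the lower bound that truthfully declaring $j$ and picking the best non-singleton plot forces $j$, via $\phi_{j,i}>1$, to take an adjacent plot. Your write-up is somewhat more explicit than the paper's (notably on why the menu $\V_i$ is report-independent and on $j$'s compliance being strictly dominant), but the decomposition and the key use of $\phi_{\min}>1$ are identical.
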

\begin{proof}
Clearly, if an agent has no friends, she cannot benefit from declaring another agent to be her friend, as it would not give her access to a better plot. Similarly, if an agent $j$ is `invited' by $i$, i.e., $j$ picks right after $i$ because $i$ declared $j$ to be her friend, $j$ is not asked to report her friendship information, so she has no opportunity to misreport. Now, suppose that $i$ has a friend (say, $j$), and $i$ gets to pick a plot before $j$.
If all plots in $\V_i$ are singletons, then friendship information is irrelevant, and $i$ has no incentive to misreport. 
Otherwise, let $v$ be a highest-value non-singleton plot in $\V_i$. Then the highest utility $i$ can hope to get in this run of the mechanism is $u_i(v)+\phi_{i, j}$, which is exactly the utility she would get by picking $v$ and declaring $j$ to be her friend: indeed, since $\phi_{j, i}>1$, $j$ would then choose a plot adjacent to $v$. Hence, $i$ has no incentive to misreport the friendship information in this case as well. 
\end{proof}

To summarize, \OnCTRSD is an attractive mechanism if $\phi_{\min}>1$; however, in general it is neither universally PO nor universally friendship-truthful. 
We next discuss modifying this mechanism to avoid these issues.


\subsection{Choose-Adjacent-SD (CA-SD) Mechanisms}
The main reason why \OnCTRSD fails both PO and friendship-truthfulness when $\phi_{\min}<1$
is that when agent $i$ declares agent $j$ to be her friend, $j$ can `jump the queue', but may choose a plot not adjacent to $i$'s. 
We now consider a mechanism that explicitly prohibits such behavior.

Specifically, this mechanism, {\sc Online Choose-Adjacent \rsd} (\OnCARSD),  proceeds identically to \OnCTRSD with one difference: if agent $i$ declares $j$ to be her friend and chooses a non-singleton plot in $\V_i$, at the next step $j$ \emph{must} choose a plot adjacent to $i$'s; if $i$ chooses a singleton plot in $\V_i$, $j$ can then choose any plot in $\V_j$. 
Alternatively, if an agent selects a singleton plot, the mechanism may forbid her from declaring a friend; this has no impact on our analysis. 

Note that \OnCARSD is equivalent to \OnCTRSD if $\phi_{\min}>1$: whenever an agent $i$ chooses after her friend, she would pick an adjacent plot if at all possible. 
However, in general, the mechanisms are different: e.g., on the instance described in Example~\ref{ex:onctrsd-po} \OnCARSD would output an allocation $\A$ with $\A(1)=v_2$, $\A(2)=v_3$, $\A(3)=v_1$.

It turns out that \OnCARSD satisfies the criteria formulated in the beginning of this section.

\begin{restatable}{theorem}{thmPOCARSD}\label{thm:PO-ON-CA-RSD}
\OnCARSD is universally PO and universally friendship-truthful; moreover, agents' strategies are polynomial-time computable.
\end{restatable}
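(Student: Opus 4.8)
The plan is to prove the three claims in the order (i) polynomial-time optimal strategies, (ii) friendship-truthfulness, (iii) Pareto optimality, because the strategy characterization is what drives the other two. Throughout I assume \emph{generic} instances, exactly as in Theorem~\ref{thm:PO-On-CT-RSD}; this is not a cosmetic assumption, since with ties even plain serial dictatorship can fail Pareto optimality (two equally-valued plots for the first picker can be allocated so that a later agent strictly gains), so some genericity/tie-breaking hypothesis is unavoidable.

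For the strategies, fix an agent $i$ at her turn with available set $\V_i$. If $i$ has no friend, or her unique friend $j$ has already been placed, she simply selects $\arg\max_{v\in\V_i}\bigl(u_i(v)+\phi_{i,j}\,\mathbb{I}(\{v,\A(j)\}\in\E)\bigr)$, an $O(n)$ computation. The interesting case is an unplaced friend $j$ with $i$ selected before $j$. The structural fact special to \OnCARSD is that if $i$ declares $j$ and picks any \emph{non-singleton} plot $v\in\V_i$, then $j$ is \emph{forced} onto an adjacent plot (one remains available since $v$ is non-singleton), so $i$ is \emph{guaranteed} the bonus $\phi_{i,j}$ irrespective of $j$'s values. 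Hence $i$'s best attainable value is $\max\bigl(\max_{v\in\V_i}u_i(v),\ \max_{v\in\V_i\text{ non-singleton}}u_i(v)+\phi_{i,j}\bigr)$: she declares $j$ and takes the best non-singleton plot when the second term wins, and otherwise takes the globally best (singleton) plot. This is polynomial-time and, unlike \OnCTRSD, requires \emph{only} $i$'s own preferences and $\phi_{i,j}$ — she never reasons about $j$'s valuations.

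For friendship-truthfulness I fix the random bits and the reports/choices of all other agents and bound every deviation of $i$. If $i$ is invited by her friend she is never asked to report, so there is nothing to misreport; if her friend is already placed (or she has none), declaring any agent $k$ neither enlarges $\V_i$ nor yields a genuine externality, so her utility is unchanged. The substantive case is $i$ selected before her unplaced friend $j$: truthful play guarantees $\max(\max_v u_i(v),\ \max_{v\text{ non-sing.}}u_i(v)+\phi_{i,j})$. Under any misreport $i$ still chooses some $x\in\V_i$; if $x$ is a singleton no friend can ever become adjacent, so her realized utility is at most $\max_v u_i(v)$, and if $x$ is non-singleton it is at most $u_i(x)+\phi_{i,j}\le\max_{v\text{ non-sing.}}u_i(v)+\phi_{i,j}$. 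Either way it is bounded by the truthful guarantee, and the bound holds pointwise for every behaviour of the others, giving universal FT.

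The Pareto-optimality claim is the main obstacle; I would prove it by induction on the number of agents, using the effective picking order in which an invited friend immediately follows her inviter. Assume for contradiction that $\A'$ weakly dominates the output $\A$, and consider the \emph{first block}: either the first picker alone (no unplaced friend, or she declines to invite and takes a singleton), or an inviter-invitee pair $\{i,j\}$. A lone first picker is handled directly by the strategy analysis, giving $U_i(\A')\le U_i(\A)$ and hence $\A'(i)=\A(i)$ by genericity. The delicate case is the pair, where I must rule out the tempting ``improvement'' of relocating the forced-adjacent $j$ to a better far-away plot: any such move destroys $i$'s friendship bonus and therefore strictly lowers $U_i$ (here $\phi_{i,j}>0$ and genericity, in particular $u_i(v)\neq u_i(w)+\phi_{i,j}$, are exactly what is needed), so to keep $U_i(\A')\ge U_i(\A)$ the allocation $\A'$ must leave $i$ on the same non-singleton plot and keep $j$ adjacent to it; genericity then pins down both $\A'(i)=\A(i)$ and $\A'(j)=\A(j)$. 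With the first block fixed, removing it and its plots — and, if the first picker had a not-yet-chosen friend, recording her (singleton) location as a fixed neighbour for that friend — yields a smaller instance of the same kind (each agent still has at most one friend, whose fixed position contributes only a constant adjacency bonus), on which $\A'$ still weakly dominates $\A$ with the required strict gain, contradicting the induction hypothesis. Capturing this coupling — that the invitee's apparent sacrifice cannot be undone without hurting the inviter — is the crux of the argument.
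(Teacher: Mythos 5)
Your proposal is correct, and its skeleton largely matches the paper's: the same two-term characterization of the optimal strategy (the better of $\max_{v\in\V_i}u_i(v)$ and the best non-singleton value plus $\phi_{i,j}$), the same deviation bound for friendship-truthfulness (every misreport yields at most the truthful guarantee, pointwise over others' behavior and random bits), and Pareto optimality driven by genericity plus the forced-adjacency coupling. The one structural difference is in the PO part. The paper argues directly: it takes the first agent $i$ in the picking order with $U_i(\A')>U_i(\A)$, asserts via genericity that all earlier agents receive identical plots in $\A$ and $\A'$, and then does a case analysis on whether $i$ picks before or after her friend; in the crucial case the contradiction comes from the \emph{inviter's} utility (if the invitee is moved off adjacency, then $U_j(\A')=U_j(\A)-\phi_{j,i}$), i.e., from weak domination of an agent other than $i$. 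You instead run an induction on the number of agents, pinning the entire first inviter--invitee block (first the inviter's plot, then the invitee's, each by genericity) and passing to a reduced instance. Both arguments hinge on exactly the same coupling insight, but your packaging has a genuine advantage: the paper's assertion that equal utilities for earlier agents force equal plots is itself nontrivial (an agent's utility also depends on where her friend sits), and it is justified precisely by the block-pinning you spell out; the price is the reduction bookkeeping (folding a placed friend's location into a modified valuation and checking the reduced instance remains generic), which the direct first-strict-improver argument avoids. Two small remarks: your explicit genericity hypothesis is the right call --- the theorem statement in the paper omits it, but the paper's proof uses it, just as Theorem~\ref{thm:PO-On-CT-RSD} does explicitly; and your strategy formula for an agent whose friend is already placed should be restricted to plots adjacent to the friend's plot in the case where the agent was invited onto a non-singleton plot (the constrained argmax is equally easy to compute, so nothing downstream breaks).
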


\begin{proof}
The analysis is similar to the analysis for \OnCTRSD with $\phi>1$. Suppose for the sake of contradiction that, given an instance of our problem, \OnCARSD produces an allocation $\A$, yet there exist another allocation $\A'$ for this instance such that $U_\ell(\A')\ge U_\ell(\A)$ for all $\ell\in N$ and $U_i(\A')>U_i(\A)$ for some $i\in N$. We can assume without loss of generality that under \OnCARSD the picking order is $(1, 2, \dots, n)$. Let $i$ be the first agent in this order such that $U_i(\A')>U_i(\A)$; since or instance is generic, this means that $\A(\ell)=\A'(\ell)$ for all $\ell<i$ and hence $\A'(i)\in\V_i$. 

If $i$ has no friends, then under \OnCARSD she picks the most valuable plot in $\V_i$; as $\A'(i)\in\V_i$, we have $U_i(\A) = u_i(\A(i))\ge u_i(\A'(i)) = U_i(\A')$, a contradiction. 

Now, suppose that $(i, j)\in F$ for some $j\in N$.
If in our run of \OnCARSD agent $i$ picks her plot after $j$, we have $\A'(j)=\A(j)$. Suppose first that agent $j$ picked a singleton plot in $\V_j$, so that the choice of agent $i$ is unconstrained, and hence she picks the most valuable plot in $\V_i$. Then the analysis is similar to the previous case: as $\A'(i)\in\V_i$, we have $U_i(\A) = u_i(\A(i))\ge u_i(\A'(i)) = U_i(\A')$, a contradiction. On the other hand, if $j$ did not pick a singleton plot, then $\A(i)$ is the most valuable plot among the plots that are adjacent to $\A(j)$. Thus, if $\A'(i)\neq \A(i)$ then $\A'(i)$ is not adjacent to $\A'(j)=\A(j)$. But this means that $U_j(\A')=U_j(\A)-\phi_{j, i}$, a contradiction with our assumption that $U_\ell(\A')\ge U_\ell(\A)$ for all $\ell\in N$.

Finally, suppose that in our run of \OnCARSD agent $i$ picks her plot before $j$. Note that $\A'(i), \A'(j)\in V_i$. If these plots are adjacent, then $i$ can pick $\A'(i)$; as $j$ will be forced to pick an adjacent plot in the next iteration, we have $U_i(\A)\ge U_i(\A')$. Otherwise, we have $U_i(\A')=u_i(\A'(i))$, so agent $i$ can obtain the same utility as in $\A'$ simply by choosing $\A'(i)$.
This completes the proof of Pareto optimality.

For friendship truthfulness, the proof is very similar to the proof of Theorem~\ref{thm:OnCTRSD-ft}: just as in that proof, an agent does not benefit from misreporting if she does not have a friend or if she chooses after her friend. Further, if agent $i$ chooses before her friend $j$, the maximum utility she can obtain is the higher of $\max_{v\in\V_i}u_i(v)$ and $\max_{v\in \V^{\textit{ns}}_i}u_i(v)+\phi_{i, j}$, where $\V^\textit{ns}_i$ is the set of non-singleton plots in $\V_i$, and she can guarantee herself that utility by reporting truthfully.

Finally, the polynomial-time computability follows from the description of the optimal strategies given in the previous paragraph.
\end{proof}

\OnCARSD has many attractive properties: it is simple, agents can compute their strategies efficiently and without knowing other agents' preferences (not even their friends' preferences!), and the mechanism always produces a PO allocation. 
However, if agents' value for being close to their friends is low relative to the differences among the plot values, they may find this mechanism to be highly problematic.
\begin{example}\label{ex:nasty}
Let $\G$ consist of a single edge $\{v, w\}$ and $n-2$ isolated plots. Every agent values $w$ at $0$ and all other plots at $1$. Suppose all friendships have value $\phi=.1$. If agents $i$ and $j$ are friends and $i$ is the first agent to pick, then $i$ will choose $v$ (as she can then benefit from being next to $j$) and $j$ will be forced to choose $w$ and get the worst plot in $\V$.
\end{example}

One may then wonder if it is possible to modify \OnCARSD to give an agent the option to decline her friend's `invitation' and choose at a later point, but without having her plot choices constrained. 
There are several ways to implement this idea. 
For instance, if agent $i$ declares a remaining agent $j$ as a friend, we can offer $j$ the choice of (1) picking a plot right after $i$, but it must be adjacent to $i$'s plot (if at all possible), or (2) declining the invitation and returning to the pool of remaining agents; we refer to this mechanism as {\sc CA-Back-To-Pool-RSD} (\CABPRSD).
Alternatively, we can sample a default agent order in advance (uniformly among all possible $n!$ orders), announce it to all agents, and then approach the agents one by one in this order, asking them to pick a plot and to declare a friend. If $i$ declares $j$ to be her friend, then $j$ can either accept the invitation, jump the queue and pick a plot adjacent to $i$'s (if such a plot exists); or, decline and keep her place in the queue (or, even more drastically, move to the end of the queue); we refer to these mechanisms as {\sc CA-Back-to-Queue-RSD} (\CABQRSD) and {\sc CA-Back-to-End-RSD} (\CABERSD), respectively.
These mechanisms seem to preserve the spirit of \OnCTRSD, but offer agents more flexibility. Unfortunately, our next example shows that neither is universally friendship-truthful.

\begin{example}\label{ex:lessnasty}
Consider an instance with agents $1, 2, 3, 4$, and plots $v_1, v_2, v_3, v_4$, arranged on a path in that order. 
Let $F^*=\{\{1, 4\}\}$. Suppose that agents' values for the plots are given by the table below and $\phi_{1, 4}=\phi_{4, 1} = .2$. 


\begin{center}
\renewcommand{\arraystretch}{0.9}
\small
  \begin{tabular}{  l  c  c  c  c }
    \toprule
            & $v_1$ & $v_2$ & $v_3$ & $v_4$\\ 
    \midrule
    agent 1 & 0     & 1     &  0   &  0  \\ 
    agent 2 & .3    & 0     & .1   &  .2  \\ 
    agent 3 & .3    & 0     & .2   &  0  \\ 
    agent 4 & 0     & 0     &  0   &  1  \\
    \bottomrule
  \end{tabular}

\end{center}

Under \OnCARSD, if agent 1 picks first, she would pick $v_2$, and announce agent 4 as her friend, forcing agent 4 to pick an adjacent plot. 
Under \CABPRSD
agent 4 can decline this option, in which case agents 2, 3, and 4 pick their plots in random order. Agent 4 chooses next w.p. $1/3$, in which case she will be able to pick her favorite plot. Thus, her expected utility is at least $1/3>\phi_{4, 1}$, so she will not confirm friendship with agent 1. Thus, under \CABPRSD, if agent 1 declares agent 4 as her friend, her utility is 1. 

Now, suppose agent 1 falsely declares agent 3 as her friend. Agent 3 has no reason to decline this invitation; indeed, accepting ensures that she receives her favorite plot (rather than risk losing it to agent 2). Thus, agent 3 accepts and picks $v_1$. Agents 2 and 4 prefer $v_4$ to $v_3$, so the first to pick claims $v_4$ for themselves. Thus, with probability $.5$ agent 4 ends up with $v_3$, which is adjacent to agent 1's plot. Hence, 
under \CABPRSD, agent 1's expected utility from declaring agent 3 as her friend is $1+.5\times .2=1.1$, which is higher than her utility from a truthful declaration.

The same argument shows that \CABQRSD and \CABERSD are not friendship-truthful: if the order is $(1, 2, 3, 4)$, then agent 1 prefers declaring agent 3 as her friend.
\end{example}
Thus, there does not seem to be an easy way to make \OnCARSD more flexible while retaining universal PO and friendship-truthfulness. 


\section{Social Welfare Maximization}\label{sec:swm}
So far, we focused on simplicity, polynomial-time computability and friendship-truthfulness; the only allocative efficiency measure we discussed was PO, which is a relatively weak requirement.
We will now derive bounds on the social welfare of the assignments produced by \OnCTRSD and \OnCARSD and their variants. 
For simplicity, we focus on friendship-uniform instances, i.e., we assume that $\phi_{i, j}=\phi$ for some fixed $\phi$ and all $(i, j)\in F$. 
Since our problem is at least as hard as the one-sided matching problem, we cannot expect \rsd and its variants to perform well for general valuations; thus, we focus on binary instances. 

For binary utilities, \citet{adamczyk2014efficiency} propose the following modification of the \rsd mechanism, which we call \rsdstar. In each iteration, before picking the next agent, \rsdstar asks all remaining agents to report if they have a positive value for some available plot. If some agents answer `yes', \rsdstar picks one of them uniformly at random, lets her pick a plot, and starts the next iteration; otherwise, \rsdstar arbitrarily pairs remaining agents with remaining plots and terminates. 
\rsdstar reduces waste while maintaining truthfulness, 
giving a $1.45$-approximation to the optimal social welfare under binary valuations; can we obtain a similar approximation ratio in our setting?

Our first result is discouraging: \OnCTRSD may produce assignments with very poor social welfare, even if $\phi>1$, i.e., even in the setting where it is PO for generic instances and friendship-truthful.

\begin{example}\label{ex:on-ct-rsd-sw}
Consider an instance with $N=\{1, \dots, n\}$, $\V=\{v_1, \dots, v_n\}$, where $\E=\{\{v_1, v_2\}\}$.
Suppose that $F^*=\{\{1, 2\}\}$, $\phi_{1, 2}=\phi_{2, 1} = 100$. All agents 
value $v_1$ at $1$ and all other plots at $0$.

Under \OnCTRSD agents 1 and 2 end up in adjacent plots if and only if one of then appears first in the picking order, i.e., with probability $\frac{2}{n}$. Thus, the expected social welfare under this mechanism is $1+2\times\frac{2}{n}\times 100$, whereas the optimal social welfare is $202$.
\end{example}

When friendships are valuable, i.e., $\phi\gg 1$, we would like to avoid the situation described in Example~\ref{ex:on-ct-rsd-sw}. 
This can be accomplished by prioritizing pairs of friends, i.e., ensuring that pairs of friends choose first, followed by agents who do not have friends. 
This requires us to elicit friendship information offline, before agents start picking plots. 
As we cannot assume that agents will report this information truthfully, to fully specify such a mechanism, we need to handle inconsistent reports: what if $i$ says that $j$ is her friend, but $j$ does not say that $i$ is her friend?
We take the conservative approach and treat $i$ and $j$ as friends iff both declare this friendship. 

Formally, this mechanism, {\sc Friends-First Choose-Together RSD$^*$}  (\FFCTRSDstar) proceeds as follows. First, each agent reports who their friend is (or $\varnothing$ for no friends). Let 
$P$ be the set of pairs $\{i,j\}$ who report each other as friends.
We pick agents in the following order: as long as there exist a pair of adjacent unoccupied plots and $P\neq\varnothing$, we randomly remove a pair of agents $\{i, j\}$ from $P$; $i$ and $j$ then choose their plots (in random order). We execute \rsdstar over remaining agents and plots once $P = \emptyset$ or no adjacent plots are available.
We analyze the performance of \FFCTRSDstar, under the assumption that agents cannot lie about their friendships and $\phi>1$.

\begin{restatable}{theorem}{thmSWFF}\label{thm:ff-scw}
Let $\A$ be the output of \FFCTRSDstar on a binary instance $I$ with $\phi_{\min}>1$, where agents truthfully report friendships. Then ${\mathbb{E}}(\SW(\A))\ge \frac{1}{4}\OPT(I)$.
\end{restatable}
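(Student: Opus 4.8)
The plan is to bound $\OPT(I)$ by separating plot value from friendship value, and then to show that \FFCTRSDstar captures a constant fraction of each. Write $\OPT_v$ for the value of a maximum-weight agent--plot matching (equivalently, since the instance is binary, the maximum number of agents that can simultaneously receive a plot they value at $1$), let $m=\nu(\G)$ be the size of a maximum matching in the plot graph, and let $p=|P|$ be the number of (mutually declared, hence under truthfulness all) friend pairs. Set $k_{\max}=\min(p,m)$. For any allocation $\A$ the friendship term contributes exactly $2\phi$ per adjacent friend pair (both $\phi_{i,j}$ and $\phi_{j,i}$), and the set of adjacent friend pairs is a matching in $\G$ among friend pairs, hence of size at most $k_{\max}$; the plot term is at most $\OPT_v$. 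Thus $\OPT(I)\le \OPT_v + 2\phi k_{\max}$. I will prove the two per-run inequalities $\SW(\A)\ge \phi k_{\max}$ and $\SW(\A)\ge \tfrac12\OPT_v$, and combine them via $\max(x,y)\ge\tfrac12(x+y)$ to obtain $\SW(\A)\ge \tfrac14(\OPT_v+2\phi k_{\max})\ge\tfrac14\OPT(I)$, which is in fact stronger than the stated expectation bound.

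First I would analyze the friends-first phase. Each time a pair is invoked there is, by the loop condition, a pair of adjacent free plots available; since $\phi>1$ exceeds any difference of plot values in $[0,1]$, the two friends end up on adjacent plots (exactly as in the proof of Theorem~\ref{thm:PO-On-CT-RSD}), contributing $2\phi$ to social welfare. Let $q$ be the number of pairs seated in this phase. The seated pairs occupy vertex-disjoint edges of $\G$, i.e.\ a matching; the phase halts only when $P=\varnothing$ (so $q=p$) or when no two free plots are adjacent (so the seated edges form a \emph{maximal} matching of $\G$, giving $q\ge \tfrac12 m$). Hence $q\ge\min(p,\tfrac12 m)\ge\tfrac12 k_{\max}$ in every run, and the friendship contribution alone yields $\SW(\A)\ge 2\phi q\ge \phi k_{\max}$.

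Next I would bound the plot value. Phase~$2$ runs \rsdstar on the remaining agents $N_2$ and plots $\V_2$; since \rsdstar stops seating value-$1$ matches only when no remaining agent values any remaining plot, its output is a \emph{maximal} matching in the binary agent--plot graph restricted to $(N_2,\V_2)$, so the plot value it collects is at least $\tfrac12\nu(N_2,\V_2)$. Removing the $2q$ seated agents and $2q$ seated plots lowers the maximum agent--plot matching by at most $4q$, so $\nu(N_2,\V_2)\ge \OPT_v-4q$. Adding the $2\phi q$ already banked in phase~$1$, a single run gives $\SW(\A)\ge \tfrac12(\OPT_v-4q)+2\phi q=\tfrac12\OPT_v+2(\phi-1)q\ge \tfrac12\OPT_v$, where the last step is precisely where $\phi>1$ absorbs the plot value possibly wasted by seating friends on low-value plots. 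With both inequalities established, the maximum-combination step closes the argument.

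The main obstacle is the interaction between the two phases: seating friends in phase~$1$ can destroy plot value (indeed the plot value alone can fall to $0$ while $\OPT_v>0$), so a naive attempt to prove ``plot value $\ge\tfrac14\OPT_v$'' fails. The resolution is twofold --- fold the banked friendship value $2\phi q$ into the plot-value estimate and use $\phi>1$ to dominate the $-4q$ loss, and then combine the two resulting bounds by \emph{taking their maximum} rather than adding them. Care is also needed to check that the $\tfrac12$-maximal-matching guarantee for \rsdstar and the bound $q\ge\tfrac12 k_{\max}$ hold in every run (not merely in expectation), and that no extra adjacent friend pair can appear in phase~$2$ (either all pairs are already seated, or no two free plots are adjacent).
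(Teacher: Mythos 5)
Your proof is correct, but it follows a genuinely different route from the paper's. The paper gives a dynamic, round-by-round charging argument: it classifies the remaining friend pairs by their utility in the optimal allocation of the residual instance, shows that the expected drop in the residual optimum in each round of the friends-first loop is at most $4$ times the expected welfare gained in that round, invokes the approximation guarantee of \rsdstar from \cite{adamczyk2014efficiency} for the tail phase, and glues the rounds together with a sub-martingale and an optional-stopping argument; the conclusion is inherently a bound in expectation. You instead prove a static decomposition $\OPT(I)\le \OPT_v+2\phi k_{\max}$ together with two ex-post guarantees: the friends-first loop stops only when every pair is seated or the seated edges form a maximal matching of $\G$ (so $q\ge\tfrac12 k_{\max}$ and $\SW(\A)\ge 2\phi q\ge\phi k_{\max}$), and \rsdstar on the residue collects a maximal matching worth at least $\tfrac12(\OPT_v-4q)$, with the banked $2\phi q$ and $\phi>1$ absorbing the $-4q$ loss (so $\SW(\A)\ge\tfrac12\OPT_v$); taking the larger of the two bounds yields $\SW(\A)\ge\tfrac14\OPT(I)$. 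Every step holds for each realization of the mechanism's coins (given rational play), so you in fact obtain a stronger universal (ex-post) guarantee, using only elementary matching arguments and no black-box appeal to \cite{adamczyk2014efficiency}. What the paper's heavier machinery buys is reusability: essentially the same martingale template is reapplied to \OnCARSDstar (Theorem~\ref{thm:on-ca-scw}) and to the $\phi<1$ regime (Theorem~\ref{thm:ff-sw-low-Fvalue}), settings where per-run bounds are impossible because an unlucky random order genuinely loses welfare, and where your two key steps --- the maximal-matching count of seated pairs and the absorption of $-4q$ by $2\phi q$ --- both rely on pairs being seated first and on $\phi>1$, and hence break down.
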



Of course, since \FFCTRSDstar prioritizes pairs of friends, we cannot expect it to be friendship-truthful.
Thus, if friendship-truthfulness is considered desirable, we are left with \OnCARSD or its variants. Specifically, \OnCARSD, too, can be modified by pushing friendless agents who value all available plots at $0$ to the back of the queue, in the spirit of \rsdstar;
we refer to this mechanism as \OnCARSDstar. It can be verified that this mechanism remains friendship-truthful. 

Since \OnCARSDstar does not prioritize friendships, we cannot expect it to have a constant approximation ratio (consider, e.g., its performance on the instance in Example~\ref{ex:on-ct-rsd-sw}).
However, if $\phi>1$, we can bound the approximation ratio of \OnCARSDstar in terms of~$\phi$.

\begin{restatable}{theorem}{thmSWCASWC}\label{thm:on-ca-scw}
Let $\A$ be the output of \OnCARSDstar on a binary instance $I$ with $\phi_{\min}>1$. Then ${\mathbb{E}}(\SW(\A))\ge \frac{1}{2\phi+2}\OPT(I)$.
\end{restatable}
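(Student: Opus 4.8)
The plan is to upper-bound $\OPT(I)$ by a plot-value term plus a friendship term, and to match each term with a separate lower bound on $\mathbb{E}(\SW(\A))$. Write $V(\A)=\sum_{i\in N}u_i(\A(i))$ and let $M(\A)$ be the number of friend pairs sitting on adjacent plots, so that $\SW(\A)=V(\A)+2\phi M(\A)$. Let $V^*$ be the maximum number of agents that can be simultaneously matched to a plot they value at $1$, and let $m^*=\min(\mu(\G),|F^*|)$, where $\mu(\G)$ is the size of a maximum matching in the plot graph; $m^*$ is the largest number of friend pairs that can be placed on vertex-disjoint plot-edges. Since every allocation has plot value at most $V^*$ and at most $m^*$ adjacent pairs, $\OPT(I)\le V^*+2\phi m^*$. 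It therefore suffices to prove the two bounds $\mathbb{E}(\SW(\A))\ge m^*$ and $\mathbb{E}(\SW(\A))\ge\tfrac12 V^*$: multiplying the first by $2\phi$, the second by $2$, and adding yields $(2\phi+2)\,\mathbb{E}(\SW(\A))\ge V^*+2\phi m^*\ge\OPT(I)$, which is the claim.

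I would prove the friendship bound $\SW(\A)\ge m^*$ deterministically, for every realization of the random order. The point, and the only place where $\phi>1$ enters here, is that whenever an agent is activated while two adjacent plots are still free, she gains utility at least $1$: a paired agent who moves first takes a non-singleton plot and drags her friend adjacent (joint pair-utility $2\phi\ge 2$), the invited friend merely shares that bonus, and a friendless agent is activated only when she can take a value-$1$ plot. Crucially, the $^*$-deferral of zero-value friendless agents in \OnCARSDstar begins only once no friend pair remains to be placed, so while any pair is still unplaced no zero-utility pick occurs. Now fix a maximum matching of $\G$ of size $\mu(\G)\ge m^*$. If all $|F^*|$ pairs get placed while some pair of adjacent plots is still free, each pair becomes adjacent and $\SW(\A)\ge 2\phi|F^*|\ge m^*$. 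Otherwise, look at the first moment the free plots form an independent set of $\G$: some pair is still unplaced, so no deferred pick has yet occurred and every agent processed so far has utility at least $1$; moreover at least $\mu(\G)\ge m^*$ plots are already occupied, since every edge of the fixed matching must have an occupied endpoint. Either way $\SW(\A)\ge m^*$.

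For the plot-value bound I would use a serial-dictatorship charging argument. Fix a maximum value-$1$ matching $(i,v_i)_{i\in G}$ with $|G|=V^*$; these are $V^*$ distinct plots. Each $v_i$ is occupied by a unique agent in the output, and I would route one unit of charge from $v_i$ to an agent of positive utility: if $v_i$ is held by someone who values it at $1$, charge her; if $v_i$ is held by someone who values it at $0$, that can only happen because she was forced onto it adjacent to a friend, so charge the corresponding pair (utility $2\phi\ge2$); and if $v_i$ is swallowed in the final zero-value phase, then $i$ herself must have been served earlier with utility at least $1$ (using $\phi>1$), so charge $i$. A standard analysis of these collisions shows each agent, and each pair, absorbs charge at most twice its own utility, giving $V^*\le 2\,\mathbb{E}(\SW(\A))$, i.e.\ the desired $\tfrac12$.

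The main obstacle is exactly this plot-value bound. In the friendless case it is the textbook fact that \rsdstar produces a maximal value-$1$ matching, hence a $\tfrac12$-approximation; but with friendships the mechanism no longer outputs a maximal matching, because a paired agent may walk away from an available value-$1$ plot to sit beside a friend, or be dragged by her friend onto a value-$1$ plot she herself values at $0$, thereby consuming it without registering any plot value. The device that saves the argument is that every such ``wasteful'' consumption of a value-$1$ plot is performed by a pair that collects $2\phi\ge2$ in friendship, which pays for it; formalizing the charging so that these adjacency credits cover precisely the value-$1$ plots consumed by value-$0$ occupants, without double counting against the collisions already present in the plain \rsdstar bound, is the delicate step. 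This is also the unique point at which the hypothesis $\phi>1$ is indispensable, both for the per-agent guarantee and for the adjacency credits to dominate the forgone plot values.
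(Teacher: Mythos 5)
You take a genuinely different route from the paper. The paper never isolates a plot-value term and a friendship term; instead it runs a per-step accounting in the style of \citet{adamczyk2014efficiency}: it classifies the remaining agents by their role in the residual optimum $\OPT^t$, shows that the expected drop in $\OPT^t$ at each step is at most $(2\phi+2)$ times the expected welfare gained in that step, and closes with a sub-martingale/optional-stopping argument. Within your decomposition, the friendship half is sound: $\SW(\A)\ge m^*$ holds for every realization, essentially by the argument you give (before the first moment at which the free plots form an independent set, every activated agent collects utility at least $1$, since $\phi_{\min}>1$ forces paired first-movers onto non-singleton plots and friendless agents are activated only when a value-$1$ plot is available to them; and at that moment every edge of a fixed maximum matching of $\G$ already has an occupied endpoint).

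The gap is the plot-value lemma $\Exp(\SW(\A))\ge\frac12 V^*$, and it is not merely a ``delicate formalization'' issue---that lemma is false for \OnCARSDstar as the paper defines it. Your charging asserts that a value-$0$ occupant of a matched plot must have been forced there adjacent to a friend (so the pair's $2\phi\ge 2$ pays the charge), or that the plot is swallowed in the final deferred phase (in which case the matched agent was served earlier). This misses the case of a \emph{paired} agent whose turn arrives after the free plots form an independent set: no adjacency is available to her, and if she values every remaining free plot at $0$ she takes an arbitrary free plot with utility $0$. She is never deferred, because \OnCARSDstar defers only \emph{friendless} zero-value agents, and her friend (choosing next, unconstrained because she took a singleton plot) can do the same. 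Such zero-utility picks are interleaved with the valuable ones rather than confined to a final phase, and the displaced matched agent need not ever collect utility $1$. Concretely, take $k$ pairs and a plot graph on $2k$ plots with no edges (or one remote edge, to make $m^*=1$); let each odd agent value one distinct plot and each even agent value nothing, so $V^*=k$. Pairs are processed in random order, and with worst-case tie-breaking each even agent, on her turn, grabs a still-valued plot of a not-yet-processed pair; a pair then scores $1$ only if its own plot is still intact, which at stage $t$ has probability roughly $(k-2t)/(k-t)$, giving $\Exp(\SW(\A))\approx(1-\ln 2)\,V^*\approx 0.31\,V^*<\frac12 V^*$. This interference by never-deferred pairs is exactly why the paper measures the damage step by step against the residual optimum (which is where the factor $2\phi+2$ enters) rather than against a static benchmark such as $V^*$; no $\phi$-free constant of the kind your decomposition requires is available.
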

The positive results presented so far in this section are for the case $\phi>1$. For $\phi<1$, positive results are more elusive. In particular, it is no longer the case that \FFCTRSDstar has a constant approximation ratio.

\begin{restatable}{proposition}{propffswsmallphi}\label{prop:ff-scw-smallphi}
There exists a friendship-uniform binary instance $I$ with $\OPT(I)=2+2\phi$ such that 
the expected social welfare of the output of \FFCTRSD is at most $\frac{6}{n}+4\phi$.
\end{restatable}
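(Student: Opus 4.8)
The plan is to exhibit an adversarial instance in which prioritising friends is catastrophic: once their plot values are all zero, the pairs of friends care only about adjacency, so they race to occupy the scarce pairs of adjacent plots, and in doing so they consume exactly the plots that carry all the realisable value. First I would construct the instance. Take two disjoint friend pairs, $F^*=\{\{1,2\},\{3,4\}\}$, together with $n-4$ friendless agents $5,\dots,n$ (so $n\ge 6$), and let the plot graph $\tup{\V,\E}$ consist of two disjoint edges $\{v_1,v_2\}$, $\{v_3,v_4\}$ and $n-4$ isolated plots $v_5,\dots,v_n$. All friendships carry the uniform weight $\phi<1$. For valuations I set $u_i\equiv 0$ for each friend $i\in\{1,2,3,4\}$, while every friendless agent values $v_1$ and $v_2$ at $1$ and all other plots at $0$. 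This instance is binary, friendship-uniform, and $\tup{N,F^*}$ has maximum degree $1$.

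Next I would pin down $\OPT(I)=2+2\phi$. Since $v_1,v_2$ are the only plots anyone values at $1$ (and only friendless agents do so), the total plot value of any allocation is at most $2$, attained exactly when two friendless agents receive $v_1,v_2$. To also collect friendship value one must place a friend pair on an edge; but if $v_1,v_2$ are used to realise the plot value, only the edge $\{v_3,v_4\}$ remains for friends, so at most one pair can be adjacent, adding $2\phi$. Hence $\SW\le 2+2\phi$, achieved by putting two friendless agents on $v_1,v_2$ and the pair $\{3,4\}$ on $v_3,v_4$. Devoting both edges to friends yields only $4\phi<2+2\phi$ because $\phi<1$, so it cannot beat this.

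I would then analyse the mechanism. Since friendships are reported truthfully, $P=\{\{1,2\},\{3,4\}\}$, and both pairs are processed before the \rsdstar phase starts. When a pair is invited, each member values every plot at $0$, so her utility is determined solely by adjacency, worth $\phi>0$; thus the first member to pick selects an endpoint of a still-free edge and the second takes the adjacent plot. With two pairs and exactly two edges, \emph{both} edges are consumed by friends, irrespective of the random pair order or any tie-breaking; in particular $v_1,v_2$ are always taken by agents who value them at $0$. By the time \rsdstar runs, no remaining friendless agent has positive value for any remaining (isolated) plot, so every realisable unit of plot value is destroyed. The social welfare is therefore exactly $4\phi$ with probability $1$, whence ${\mathbb E}(\SW(\A))=4\phi\le\frac{6}{n}+4\phi$.

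The hard part is making the value destruction \emph{unavoidable} rather than tie-break dependent, and this is precisely why I use two pairs. A single friend pair would be indifferent between the valued edge $\{v_1,v_2\}$ and a worthless edge, so one could not guarantee that it lands on $\{v_1,v_2\}$; and with the \rsdstar phase, any valued plot the friends \emph{avoid} is later recovered by a friendless agent, so avoidance would make the mechanism optimal. Using two pairs against two edges forces all adjacency capacity to be absorbed by friends, so the valued edge is always taken. The complementary subtlety is the $\OPT$ computation, where one must verify that no allocation can simultaneously realise the two units of plot value and make both pairs adjacent; it is exactly this tension (the valued plots and one edge coincide) that fixes $\OPT$ at $2+2\phi$ while the mechanism is stuck at $4\phi$, so that the ratio degrades like $2\phi/(1+\phi)\to 0$ as $\phi\to 0$.
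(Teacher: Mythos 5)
Your proof is correct, but it takes a genuinely different route from the paper's. The paper uses a double-star plot graph (two centers $v_1,w_1$ adjacent to each other, all other plots being leaves of one of them), gives \emph{every} agent a friend ($n/2$ pairs), and places the only positive plot values on the two centers, valued by agents $1$ and $2$ --- who are themselves a friend pair. Since every edge of that plot graph uses a center, the centers are gone after at most two pairs have picked, and the argument is probabilistic: pair $\{1,2\}$ is drawn first with probability $2/n$ (capturing plot value $2$), drawn second with probability $2/n$ (capturing plot value at most $1$), and otherwise the centers are occupied by agents who value them at $0$, so the welfare is at most $4\phi$; summing gives exactly the stated $\frac{6}{n}+4\phi$. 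You instead concentrate all plot value on plots that only \emph{friendless} agents want, and pit two friend pairs against two disjoint edges, one of which carries all the value. Because \FFCTRSDstar processes every mutually reported pair before any friendless agent gets to pick, and each friend strictly prefers adjacency (worth $\phi>0$) to any plot (all worth $0$ to her), both edges are consumed by friends with probability $1$, so the realized welfare is deterministically $4\phi\le\frac{6}{n}+4\phi$. Your construction is simpler and proves a strictly stronger statement (no $6/n$ term and no probability calculation), at the cost of relying on friendless agents to hold the value; the paper's instance demonstrates a subtler failure mode --- even when every agent has a friend and the valuable plots are valued by a prioritized pair itself, that pair captures them only with probability $O(1/n)$ under the random pair order. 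Both arguments need an upper bound on $\phi$ to certify $\OPT(I)=2+2\phi$ (you use $\phi<1$, the paper uses $\phi<1/2$), which is consistent with the proposition's role as a small-$\phi$ negative result.
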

\begin{proof}
Consider an instance with $N=\{1, \dots, n\}$, where $n=2k$ is even, 
$\V=\{v_1, \dots, v_k, w_1, \dots, w_k\}$, $\E=\{\{v_1, v_i\}: 2\le i\le k\}\cup\{\{w_1, w_i\}: 2\le i\le k\}\cup\{\{v_1, w_1\}\}$,
$F^*=\{\{2i-1, 2i\}: i=1, \dots, k\}$.
Suppose that $u_1(v_1)=u_1(w_1)=u_2(v_1)=u_2(w_2)=1$ and all other plot values are $0$.

If $\phi<.5$, an optimal allocation assigns $v_1$ and $w_1$ to agents $1$ and $2$, so that the social welfare is $2+2\phi$. Now, under \FFCTRSDstar the probability that agents 1 and 2 appear in the first two positions of the picking order is $\frac{2}{n}$, and the probability that they appear in the next two positions of the picking order is $\frac{2}{n}$ as well; if neither of these events happens, plots $v_1$ and $w_1$ will be occupied by agents who value them at $0$ (but derive positive utility from being next to their friend), so the social welfare will be at most $4\phi$. Thus, the expected social welfare of the allocation produced by \FFCTRSDstar is at most 
$2\times\frac{2}{n}+
1\times\frac{2}{n}+4\phi$.
\end{proof}

Our last result applies not just to variants of the \rsd mechanism, but to all truthful mechanisms: the approximation ratio of any such mechanism is at most $1+\frac{1}{2\phi}$, even if agents cannot misreport their friendship information.

\begin{restatable}{proposition}{proplowerbound}\label{prop:truthful}
Consider a mechanism $\mathcal M$ that has access to the friendship graph $\tup{N, F}$, asks the agents to report their values for the plots, and outputs an allocation based on the agents' report and the friendship graph. If no agent can benefit from misreporting her plot values under $\mathcal M$ then here exists a friendship-uniform binary instance $I=\tup{N, \V, F, (u_i)_{i\in N}, (\phi_{i, j})_{(i, j)\in F}}$ such that for the allocation $\A$ output by $\mathcal M$ we have $\frac{{\mathbb E}(\SW(\A))}{\OPT(I)}\le \frac{2\phi}{2\phi+1}$.
\end{restatable}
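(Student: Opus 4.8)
The plan is to treat value-truthfulness as a structural constraint on the allocation rule and to exhibit a family of binary friendship-uniform instances on which this constraint is incompatible with near-optimal welfare. Since $\mathcal{M}$ observes both the friendship graph and the reported values, on any \emph{single} instance it could in principle just output an optimal allocation (a deterministic priority rule is value-truthful); the leverage must therefore come from the incentive constraints that \emph{link} $\mathcal{M}$'s behaviour across instances differing only in one agent's valuation. Concretely, I would fix the friendship graph to a single pair of friends $\{1,2\}$ (so the instance automatically has maximum degree $1$) together with enough friendless filler agents, fix a plot graph consisting of one edge plus isolated plots, and then vary the valuation of one friend, say agent $1$, over a small set of binary types. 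For each type I would determine the welfare-maximising allocation and record which plot agent $1$ receives and whether she is placed adjacent to agent $2$.

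The heart of the argument is that the friendship bonus enters agent $1$'s utility, so her incentive constraints take the form $u_1(\A(1)) + \phi\,\mathbb{I}\!\left(\{\A(1),\A(2)\}\in\E\right) \ge (\text{the same quantity under any misreport})$. I would design the family so that the welfare-optimal allocation on one instance places agent $1$ on a unit-value plot \emph{without} realising the friendship (because realising it would cost the friendless agents more than $2\phi$ in value), while on a companion instance the welfare-optimal allocation places her on a unit-value plot \emph{with} the friendship realised. If these optima are misaligned with agent $1$'s own preferences --- i.e.\ the adjacency-bearing outcome is strictly preferred by the type whose optimum is the non-adjacent one --- then the two optimal outcomes violate (weak/cycle) monotonicity and cannot both be implemented by a truthful rule. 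Hence any value-truthful $\mathcal{M}$ must deviate from the optimum on at least one instance, and the cheapest feasible deviation forfeits one unit of plot value while at best retaining the friendship bonus $2\phi$. Calibrating the numbers so that $\OPT(I)=2\phi+1$ on each instance while the best truthful expected welfare is $2\phi$ is what produces the ratio $\tfrac{2\phi}{2\phi+1}$.

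To convert ``deviates on at least one instance'' into a clean per-instance bound I would average over the companion instances: their optima are each $2\phi+1$, whereas the monotonicity obstruction caps the \emph{average} expected welfare attainable by $\mathcal{M}$ at $2\phi$, so some member $I$ of the family satisfies $\mathbb{E}(\SW(\A))/\OPT(I)\le \tfrac{2\phi}{2\phi+1}$; that member is the witness demanded by the statement. Because every instance in the construction uses $0/1$ valuations and a single shared weight $\phi$, the witness is binary and friendship-uniform, as required.

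The main obstacle is the middle step --- making the incentive constraints genuinely bite. As noted, symmetric or single-instance examples are easy to satisfy optimally with a truthful rule, so the construction must force a real monotonicity/cycle-monotonicity violation \emph{among the welfare-optimal allocations}, and it must do so using only binary valuations and one shared friendship weight while the arithmetic in $\phi$ lands exactly on $2\phi$ versus $2\phi+1$. The delicate part is showing that \emph{no} value-truthful mechanism --- deterministic or randomised, and free to exploit the known friendship graph --- can beat the bound, rather than merely that specific \rsd-type mechanisms cannot; for this I expect to invoke the single-agent incentive-compatibility (taxation) characterisation, applied to the friend whose utility carries the $\phi$ term, to pin down the menu of expected outcomes she can induce and thereby lower-bound the unavoidable welfare loss.
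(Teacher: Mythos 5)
There is a genuine gap, and it is precisely the step you flag as "delicate" at the end: making the incentive constraints bite against \emph{randomized} mechanisms. Your plan uses a single pair of friends plus fillers and a bounded family of companion instances, and then asserts that a monotonicity violation among the welfare-optimal allocations "caps the average expected welfare at $2\phi$." That quantitative cap is exactly what needs proving, and with a bounded family it is generally false: a randomized truthful mechanism can hedge, outputting the optimum of each instance with probability $1-\varepsilon$ and an agent-preferred allocation with probability $\varepsilon$, choosing $\varepsilon$ just large enough to satisfy the (linear) IC constraints. To drive the achievable welfare all the way down to $2\phi$ (against an optimum of $2\phi+1$) you need the relevant probability to vanish, which no fixed two-instance construction delivers. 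The paper's proof supplies the missing engine: the plot graph is a \emph{star} with center $v_1$, there are $n/2$ pairs of friends who all value every plot at $0$ (instance $I_1$), and since only one agent can hold the center, the pigeonhole principle guarantees some pair $\{2i-1,2i\}$ receives $v_1$ with probability at most $2/n$. In the star, that pair is adjacent iff one of them holds $v_1$, so agent $2i$'s utility in $I_1$ is determined by this event; a single IC inequality (agent $2i$ in $I_1$ must not gain by reporting $u_{2i}(v_1)=1$) then forces the same $2/n$ bound on the companion instance $I_2$ in which $u_{2i}(v_1)=1$ truly holds. Hence $\Exp(\SW(\A))\le 2\phi+2/n$ on $I_2$ while $\OPT(I_2)=2\phi+1$, and letting $n\to\infty$ gives the stated ratio. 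Scarcity plus pigeonhole over \emph{many} friend pairs, not a monotonicity cycle, is what defeats randomization.

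A secondary point: your framing of the obstruction is also off. In the paper's construction the welfare optimum of $I_2$ is perfectly \emph{aligned} with agent $2i$'s preferences (she wants the center, and the optimum gives it to her); there is no conflict between the optima of the two instances in the weak-monotonicity sense. The obstruction is that in the all-zeros instance $I_1$ the center is a scarce resource that some pair almost never gets, and truthfulness forbids the mechanism from rewarding the report "$u_{2i}(v_1)=1$" with the center, since in $I_1$ that report would be a profitable lie for the starved pair. So no appeal to cycle monotonicity or a taxation/menu characterization is needed—one elementary IC inequality suffices—but the construction itself (star, all-zero valuations, $n/2$ pairs, $n$ growing) is indispensable and is absent from your proposal.
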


\begin{proof}
Let $I_1=\tup{N, \V, F, (u_i)_{i\in N}, (\phi_{i, j})_{(i, j)\in F}}$,
where $N=\{1, \dots, n\}$ and $n$ is even, $n=2k$, $\V=\{v_1, \dots, v_n\}$, $\E=\{\{v_1, v_i\}: 2\le i\le n\}$, $F=\{\{2i-1, 2i\}: 1\le i\le k\}$, 
$u_i(v)=0$ for all $i\in N$ and all $v\in \V$, and there exists a positive value $\phi$ such that $\phi_{i, j}=\phi$ for all $(i, j)\in F$. That is, the plot graph is a star with center $v_1$, and each agent has a friend and values all plots at $0$. We have $\SW(\A)=2\phi$ for every $\A\in\AA(I_1)$. 

By the pigeonhole principle, there exists a pair of friends $\{2i-1, 2i\}$ such that mechanism $\mathcal M$ allocates $v_1$ to $2i-1$ or $2i$ with probability at most $\frac{2}{n}$. Now, consider the instance $I_2$ that is obtained from $I_1$ by changing $u_{2i}(v_1)$ to $1$. Since $\mathcal M$ is truthful, agent $2i$
cannot increase her utility in $I_1$ by misreporting her utility function, so given $I_2$, $\mathcal M$ allocates $v_1$ to $2i$ with probability at most $\frac{2}{n}$. Thus, the expected social welfare of the allocation produced by $\mathcal M$ on $I_2$ is at most $\frac{2}{n}+2\phi$, whereas $\OPT(I_2)=1+2\phi$. As $n$ can be arbitrarily large, the bound follows.
\end{proof}


\section{Conclusions and Future Work}\label{sec:conclusions}
We have analyzed the problem of allocating plots of land to buyers who have intrinsic preferences over their neighbors. While the problem in its full generality offers several non-trivial computational challenges, we show that under some realistic assumptions on buyer preferences and permitted reports, it is possible to design simple mechanisms that maintain both truthful reporting and social welfare guarantees. 

We obtain positive results if all agents value their friendships highly ($\phi_{i, j}>1$), and even stronger positive results are known in the absence of friendships (i.e., if $\phi_{i, j}=0$). However, paradoxically, the presence of low-valued friendships may result in significant welfare loss, as shown by Proposition~\ref{prop:truthful}. To see why this may be the case, note that even low-value friendships may distort agents' behavior under RSD, thereby changing the allocation significantly.

We focused on RSD-like mechanisms for our problem; however, it may also be useful to consider other approaches. E.g., we can explore market-like mechanisms, where agents are allocated identical budgets and need to bid on plots and possibly on friendships, 
in the spirit of \citet{budish2011}.


\bibliographystyle{ACM-Reference-Format}  
\bibliography{ijcai20}  

\appendix

\section{Omitted Proofs from Section~\ref{sec:swm}}

\thmSWFF*
\begin{proof}
Consider the \FFCTRSD* at iteration $t$; we define $N^t$ be the left (unassigned) agents at time $t$, $\mathcal G^t$ be the topology unallocated plots at time $t$ with vertex set $\mathcal V^t$ and edges $\mathcal E^t$, $F^t$ be the singular friendship structure among $N^t$, $ \OPT^t$ be the optimal allocation of $N^t$ to the plot topology $\mathcal G^t$ and $\SW^t$ be the total social welfare by the algorithm by time $t$. We denote the past history of plot assignment before the $t$-th iteration as $\mathcal H^t$. We note that the $\SW(\mathcal A) = \SW^T$ ($T$ be the total number of iterations) and $ \SW(\OPT) = \SW(\OPT^0)$.
Now, at each iteration $t$, we divide all friendship pairs into four sets as follows: 
\begin{align*}
    &Z_1^t = \{ (i,j)\in F^t: u_l(\OPT^t) = 1 + \phi; l=i,j  \}\\
    &Z_2^t = \{ (i,j)\in F^t: u_i(\OPT^t) + u_j(\OPT^t) = 2\phi +1   \}\\
    &Z_3^t = \{ (i,j)\in F^t: u_l(\OPT^t) = \phi; l=i,j   \}\\
    &Z_4^t = \{ (i,j)\in F^t: u_l(\OPT^t) \leq 1; l=i,j   \}
\end{align*}
For convenience, we write $|Z_i^t| = z_i^t$ for all $i=1,\dots, 4$ and $|F^t| = f^t$. As friendship value $\phi >1$, at iteration $t$, if the first while loop runs then the randomly chosen pair will always be able to capture two adjacent plots due to higher friendship value $\phi$ than plot values and availability of adjacent plots. which implies that 
\begin{equation}\label{eqn:firstwhileSW}
    \Exp [\SW^{t+1} | \mathcal{H}^t ] \geq \SW^{t} + \frac{z_1^t}{f^t}(2\phi + 2) + \frac{z_2^t}{f^t}(2\phi + 1) + \frac{z_3^t + z_4^t}{f^t}(2\phi)  
\end{equation}

Now, we analyze the decrease in the optimal welfare after each iteration when the first while loop condition satisfies ($\mathcal E^t \neq \emptyset$ and $F^t \neq \emptyset$). Consider randomly selected pair $(i,j)$ at iteration $t$, they will grab any available adjacent plots among available plots $(v,w) \in \mathcal E^t$ due to higher friendship value than plot values. If randomly chosen $(i,j) \in Z_1^t \cup Z_2^t \cup Z_3^t $, then $(i,j)$ can destroy at most $2$ allocated adjacent plots to friends in $\OPT^t$ allocation and their own friendship. Therefor, it can cost at most $6\phi$ in friendship value of the $\OPT^t$ allocation and if $(i,j) \in Z_4^t$ then it can cost at most $4\phi$ in friendship value of the $\OPT^t$ allocation. Moreover, in the case when $(i,j)\in Z_1^t \cup Z_2^t \cup Z_4^t$, by grabbing adjacent plots $(v,w)$, it can disturb at most $4$ assigned plot (with value $1)$ in $\OPT^t$ allocation and when $(i,j)\in Z_3^t$, by grabbing adjacent plots $(v,w)$, it can disturb at most $2$ plot values in $\OPT^t$ allocation. Therefore we can write using equation~\ref{eqn:firstwhileSW}; 

\begin{align*}
    &\Exp [\SW(\OPT^{t+1}) - \SW(\OPT^{t}) | \mathcal{H}^t ]\\
    &\leq \frac{z_1^t+z_2^t}{f^t}(6\phi + 4) + \frac{z_3^t}{f^t}(6\phi +2) + \frac{z_4^t}{f^t}(4\phi + 4)\\
    &\leq \frac{4(2\phi + 2)z_1^t}{f^t} + \frac{4(2\phi + 1)z_2^t}{f^t} + \frac{8\phi( z_3^t + z_4^t)}{f^t}\\
    &\leq 4 \cdot \Exp [\SW^{t+1} - \SW^{t}  | \mathcal{H}^t ]
\end{align*}

Once all pairs of friends or adjacent edges in the topology are exhausted ($F^t = \emptyset$ or $\mathcal E^t = \emptyset$), our algorithm becomes random serial dictatorship in one-sided matching market. Theorem 2 in \cite{adamczyk2014efficiency} implies that

\begin{align*}
   &\Exp [\SW(\OPT^{t+1}) - \SW(\OPT^{t}) | \mathcal{H}^t ] \leq \\ 
   & 3 \cdot \Exp [\SW^{t+1} - \SW^{t}  | \mathcal{H}^t ] \leq 4 \cdot \Exp [\SW^{t+1} - \SW^{t}  | \mathcal{H}^t ]
\end{align*}

This implies that the sequence of random variables $X^0 = 0$ and $X^t - X^{t-1} = 4\cdot(\SW^t - \SW^{t-1}) - (\SW(\OPT)^t - \SW(\OPT)^{t-1})$ is a sub-martingale as $\Exp[X^{t+1}|\mathcal H^t] \geq X^t$. Therefore by Doobs Stopping Theorem, we get $\Exp[X^{T}|\mathcal H^t] \geq 0$ which implies,

\begin{align*}
    0 &\leq \Exp \bigg [ \sum_{k = 1}^T (X^k - X^{k-1} ) \bigg ] = 4 \cdot \Exp \bigg[ \sum_{k=1}^T (\SW^{k} - \SW^{k-1}) \bigg ]\\ 
    & - \Exp \bigg[ \sum_{k=1}^T (\SW(\OPT^{k-1}) -  \SW(\OPT^{k}) ) \bigg ]\\
    &\Rightarrow \Exp [ \SW(\mathcal{A}) ] \geq \frac{\OPT}{4}
\end{align*}
This concludes the proof. 
\end{proof}

\thmSWCASWC*

\begin{proof}
Let $N^t, \mathcal G^t, \mathcal{V}^t, \mathcal E^t, \mathcal H^t, \SW^t$ and $\OPT ^t$ are defined similar to the proof of Theorem~\ref{thm:ff-scw} at $t-$th iteration of \OnCARSD*. We denote $|N^t|$ as $n^t$ for simplification. We define the subsets of the $N^t$ as follows:
\begin{align*}
    &Z_1^t = \{ i\in N^t: i \text{ has friend $j$, } u_l(\OPT^t) = \phi + 1; l=i,j  \}\\
    &Z_2^t = \{ i\in N^t: i \text{ has friend $j$, } u_i(\OPT^t)+ u_j(\OPT^t) = 1 + 2\phi\}\\
    &Z_3^t = \{ i\in N^t: i \text{ has friend $j$, } u_l(\OPT^t) = \phi ; l=i,j  \}\\
    &Z_4^t = \{ i\in N^t: i \text{ has no friend, } u_l(\OPT^t) \leq 1; l=1,2 \}\\
    &Z_5^t = \{ i\in N^t: i \text{ has no friend, } u_i(v) = 0, \forall v\in \mathcal V^t  \}   
\end{align*}

Consider $t$ with $\mathcal E^t \neq \emptyset$, If randomly chosen $i\in Z_1^t\cup Z_2^t\cup Z_3^t\cup Z_4^t$, as value of friendship $\phi>1$ and $\mathcal E^t \neq \emptyset$, $i$ will declare her friend $j$ and pick a plot $v\in \mathcal V^t$ with available adjacent plot with maximizing her own utility then if possible her friends utility. If $i\in N^t \setminus (Z_1^t \cup Z^t_2 \cup Z_3^t\cup Z_4^t\cup Z_5^t)$, $i$ will pick a plot with value $1$. We note that the \OnCARSD never picks $i\in Z_5^t$. This implies 
\begin{equation}
\begin{split}
     \Exp [\SW^{t+1}| \mathcal H^t] \geq \SW^{t} + \frac{z_1^t}{n^t}(2\phi +2) + \frac{z_2^t}{2}(2\phi + 1)\\ +\frac{(z_3^t + z_4^t)2\phi}{n^t}
    + 1 - \frac{(z_1^t + z_2^t + z_3^t+ z_4^t + z_5^t)}{n^t}   
\end{split}
\end{equation}
We now analyse the decrease in optimal welfare at iteration $t$ whenever $\mathcal E^t \neq \phi $. If the randomly chosen agent $i$ at iteration $t$ belongs to $Z_1^t \cup Z_2^t \cup Z_3^t$ then the agent $i$ will pick the plot $v \in \mathcal V^t$ and force her friend $j$ to pick $w \in \mathcal N_v$. Therefore in any case, by grabbing two adjacent plots $(v,w) \in \mathcal E^t$, they can destroy at most $2$ other friendship values, their own friendship value in the $\OPT^t$ allocation where if $i\in Z_4^t$ then it can destroy at most $2$ friendship values in $\OPT^t$ allocation (as they are not assigned adjacent plots in $\OPT^t$ allocation). Now, we analyse the decrease in plot values in optimal welfare. If $i \in Z_1^t \cup Z_4^t$, they can destroy at most $4$ plot values in $\OPT^t$ allocation. Similarly, if $i \in Z_3^t \cup (N^t \setminus (Z_1^t \cup Z^t_2 \cup Z_3^t\cup Z_4^t\cup Z_5^t))$, it can destroy at most $2$ plot values in $\OPT^t$ allocation and for $i \in Z_2^t$ can destroy at most $3$ plot values in $\OPT^t$ allocation (only one of them assigned to high valued plot in $\OPT^t$). This implies that: 

\begin{align*}
    &\Exp [\SW(\OPT^{t}) - \SW(\OPT^{t+1}) | \mathcal{H}^t ] \\
    & \leq  \frac{z_1^t(6\phi + 4)}{n^t} + \frac{z_2^t(6\phi + 3)}{n^t} + \frac{z_3^t(6\phi + 2)}{n^t} + \frac{z_4^t(4\phi + 4)}{n^t}\\
    & \qquad \qquad + \Big( 1- \frac{ z_1^t + z_2^t + z_3^t + z_4^t + z_5^t}{n^t}\Big)(2\phi +2)  \\  
    & \leq  \frac{z_1^t(2\phi + 2)^2}{n^t} + \frac{z_2^t(2\phi + 1)(2\phi + 2)}{n^t} + \frac{z_3^t(2\phi)(2\phi + 2)}{n^t} +  \\
    & \frac{z_4^t(2\phi)(2\phi + 2)}{n^t} + \Big( 1- \frac{ z_1^t + z_2^t + z_3^t + z_4^t + z_5^t}{n^t}\Big)(2\phi +2)  \\
    &\leq (2\phi + 2) \cdot \Exp [\SW^{t+1} - \SW^{t}  | \mathcal{H}^t ]
\end{align*}

Note that $z_1^t+z_2^t + z_3^t +z_4^t + z_5^t \leq n^t$. Once all adjacent edges in the topology are exhausted ($\mathcal E^t = \emptyset$), our algorithm becomes a random serial dictatorship in a one-sided matching market. Theorem 2 in \cite{adamczyk2014efficiency} implies that

\begin{align*}
   &\Exp [\SW(\OPT^{t+1}) - \SW(\OPT^{t}) | \mathcal{H}^t ] \leq \\ 
   & 3 \cdot \Exp [\SW^{t+1} - \SW^{t}  | \mathcal{H}^t ] \leq (2\phi+2)  \cdot \Exp [\SW^{t+1} - \SW^{t}  | \mathcal{H}^t ]
\end{align*}

By similar argument as Theorem~\ref{thm:ff-scw}, we obtain \begin{equation*}
    \Exp [\SW (\mathcal A) ] \geq \frac{\OPT}{2\phi+2}
\end{equation*}
which concludes the proof.
\end{proof}

\begin{theorem}\label{thm:ff-sw-low-Fvalue}
Let $\A$ be the output of \FFCTRSDstar on a binary instance $I$ with $\phi<1$, where all agents report their friendships truthfully. Then ${\mathbb{E}}(\SW(\A))\ge \frac{\phi}{4\phi+4}\OPT(I)$.
\end{theorem}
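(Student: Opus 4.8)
The plan is to mirror the submartingale argument used in the proof of Theorem~\ref{thm:ff-scw}, recalibrated for the regime $\phi<1$, where a selected pair of friends no longer necessarily grabs two adjacent plots. As before, I would track the run of \FFCTRSDstar iteration by iteration: let $N^t,\G^t,\SW^t$ and $\OPT^t$ denote, respectively, the remaining agents, the remaining plot graph, the welfare already secured by the mechanism, and the optimal welfare of the residual instance at step $t$, with $\mathcal H^t$ the history up to step $t$. The goal is to show that the process $X^t$ defined by $X^0=0$ and $X^t-X^{t-1}=\frac{4\phi+4}{\phi}\bigl(\SW^t-\SW^{t-1}\bigr)-\bigl(\OPT^{t-1}-\OPT^t\bigr)$ is a submartingale, i.e.\ $\Exp[X^{t+1}\mid\mathcal H^t]\ge X^t$. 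Doob's optional stopping theorem then yields $\Exp[\SW(\A)]\ge\frac{\phi}{4\phi+4}\OPT(I)$ in exactly the manner concluding the two appendix proofs, since the total $\SW$-increments sum to $\SW(\A)$ and the total $\OPT$-decrements sum to $\OPT(I)$.

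The heart of the argument is the per-iteration inequality $\Exp[\OPT^t-\OPT^{t+1}\mid\mathcal H^t]\le\frac{4\phi+4}{\phi}\,\Exp[\SW^{t+1}-\SW^t\mid\mathcal H^t]$, which I would establish separately for the two phases. While pairs remain and adjacent plots are available, the mechanism removes a uniformly random mutual pair $\{i,j\}$ who then pick in random order, and I need two bounds. First, a lower bound on the welfare the pair contributes: each agent's realised utility lies in $\{0,\phi,1,1+\phi\}$, the friendship term is present for both members or for neither, and a pair that values all remaining plots at $0$ will take two adjacent plots (friendship being the only available source of positive utility); consequently the pair's total gain is never $0$ and is at least $\min\{1,2\phi\}$. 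Second, an upper bound on the optimal welfare destroyed: because $\tup{N,F^*}$ has maximum degree $1$, the removed agents $i$ and $j$ are each other's only friend, so deleting them severs no \emph{external} friendship, and reassigning the occupants of the two grabbed plots inside a fixed optimal allocation of $\OPT^t$ disturbs at most a constant number of unit plot-values and of $2\phi$-valued friend-adjacencies. Tracking this destruction through the same partition of pairs by their role in $\OPT^t$ used for Theorem~\ref{thm:ff-scw} (friends placed adjacently with positive value, friends placed adjacently without, friends not adjacent, etc.) is what calibrates the constants to the claimed multiplier $\frac{4\phi+4}{\phi}$.

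Once $P=\varnothing$ or no adjacent plots remain, \FFCTRSDstar degenerates to \rsdstar on the residual one-sided matching instance, and Theorem~2 of \citet{adamczyk2014efficiency} supplies $\Exp[\OPT^t-\OPT^{t+1}\mid\mathcal H^t]\le 3\,\Exp[\SW^{t+1}-\SW^t\mid\mathcal H^t]$. Since $3\le\frac{4\phi+4}{\phi}$ for every $\phi\le 1$, the submartingale inequality survives this phase as well, and the two phases combine into a single process.

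The step I expect to be the main obstacle is the first phase, specifically matching the \emph{degraded} welfare gain against the optimal destruction. When $\phi<1$ a selected pair may decline to sit adjacently and instead seize two high-value, non-adjacent plots; this simultaneously lowers its guaranteed gain (the friendship term can vanish) and changes how much of $\OPT^t$ it disrupts (its own friendship bonus is no longer created, but up to two neighbouring friendships may still be broken). The delicate part is verifying that the partition-based bookkeeping still closes using the worst-case gain $\min\{1,2\phi\}$, rather than the $2\phi$-plus-plot-value available when $\phi>1$, so that the per-class ratios all stay below $\frac{4\phi+4}{\phi}$; pinning the constant down to exactly $\frac{\phi}{4\phi+4}$ instead of a weaker fraction is where the care lies. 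I would also handle the boundary cases of the reassignment (when the two grabbed plots coincide with the pair's own plots in $\OPT^t$, or when fewer than two occupants need relocating), as these only help but must be treated to make the destruction bound rigorous.
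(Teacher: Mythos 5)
Your skeleton matches the paper's proof exactly: the same iteration-by-iteration submartingale $X^t$ with multiplier $\frac{4\phi+4}{\phi}=4\bigl(1+\frac{1}{\phi}\bigr)$, Doob's optional stopping, the two-phase split, Theorem~2 of \citet{adamczyk2014efficiency} with factor $3\le\frac{4\phi+4}{\phi}$ for the residual \rsdstar phase, and the same partition $Z_1^t,\dots,Z_4^t$ of pairs by their role in $\OPT^t$ for the destruction bound. The gap is in the one step you yourself flag as delicate, and it is fatal as you have set it up: you commit to the \emph{uniform} worst-case gain $\min\{1,2\phi\}$ for every selected pair, and with that the bookkeeping provably does not close. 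The pairs causing the largest destruction are those in $Z_1^t\cup Z_2^t$ (pairs adjacent in $\OPT^t$ with plot value), for which the destruction bound is $6\phi+4$; pairing this with a gain of only $\min\{1,2\phi\}$ gives, for $\phi\in\bigl(\sqrt{2/3},\,1\bigr)$, a per-class ratio of $6\phi+4$, which exceeds $4+\frac{4}{\phi}=\frac{4\phi+4}{\phi}$ (e.g.\ at $\phi=0.9$: $9.4$ versus $\approx 8.44$). So the submartingale inequality fails for those classes on that range of $\phi$, and no amount of care in the reassignment boundary cases repairs it.

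The missing idea, which is how the paper closes the argument, is that gains must be bounded \emph{per class}, matched to the per-class destruction: a pair in $Z_1^t$ can always secure its two adjacent value-$1$ plots from $\OPT^t$ (or something at least as good), so its gain is $2\phi+2$, not $\min\{1,2\phi\}$; a pair in $Z_2^t$ gains at least $\min\{2,2\phi+1\}$ (it either grabs its $\OPT^t$ plots or two non-adjacent high-value plots); only for $Z_3^t\cup Z_4^t$ does the gain degrade to $\min\{1,2\phi\}$, but there the destruction is also smaller ($6\phi+2$ and $4\phi+4$ respectively). With this matching, every class ratio is at most $4\bigl(1+\frac{1}{\phi}\bigr)$ for $\phi\le 1$ — e.g.\ for $Z_3^t$ one checks $6\phi+2\le 4+\frac{4}{\phi}\iff(3\phi+2)(\phi-1)\le 0$, tight at $\phi=1$ — and the rest of your argument (Doob, phase two) then goes through verbatim. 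In short: right scaffolding, but the single quantitative lemma you deferred is exactly where class-specific gains are indispensable, and the uniform bound you proposed in its place is false for $\phi$ close to $1$.
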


\begin{proof}
The proof of the theorem is similar to the Theorem~\ref{thm:ff-scw}. Consider the same terminologies which were defined in Theorem~\ref{thm:ff-scw}. First we analyse the gain in social welfare at $t-$th iteration when $F^t \neq \emptyset $ and $\mathcal E^t \neq \phi $. We notice that the adjacent plots are available at $t-$th iteration. If randomly selected pair $(i,j)\in Z_1^t$ then both can obtain their maximum possible utility $1 + \phi$, however, it becomes little tricky when $(i,j)\notin Z_1^t$. If $(i,j)\in Z_2^t$, then increment in social welfare at $t-$th iteration is at least $\min \{2,2\phi +1\}$--either $(i,j)$ can grab the assigned plots in $\OPT^t$ or they both grab their respective high valued plots which are non-adjacent. Similarly, when $(i,j)\in Z^t_3\cup Z^t_4$, $(i,j)$ grabs two adjacent plots or two non-adjacent plots where at least one of them is getting high valued plots. Which implies;
\begin{equation*}
\begin{split}
      \Exp [\SW^{t+1} | \mathcal{H}^t ] \geq \SW^{t} + \frac{z_1^t(2\phi + 2)}{f^t} + \frac{z_2^t(\min\{2,2\phi +1\})}{f^t} \\
      + \frac{(z_3^t + z_4^t)(\min \{1,2\phi\})}{f^t}  
\end{split}
\end{equation*}

The decrease in the optimal welfare after $t-$th iteration should be upper bounded by a similar quantity as Theorem~\ref{thm:ff-scw}. Therefore for $\phi< 1$ we can write:

\begin{align*}
    &\Exp [\SW(\OPT^{t+1}) - \SW(\OPT^{t}) | \mathcal{H}^t ]\\
    &\leq \frac{z_1^t+z_2^t}{f^t}(6\phi + 4) + \frac{z_3^t}{f^t}(6\phi +2) + \frac{z_4^t}{f^t}(4\phi + 4)\\
    &\leq \frac{4\Big(1+\frac{1}{\phi} \Big)(2\phi + 2)z_1^t}{f^t} + \frac{4\Big(1+\frac{1}{\phi} \Big)(\min\{2,2\phi +1\})z_2^t}{f^t}\\
    &\qquad \qquad+ \frac{4\Big(1+\frac{1}{\phi} \Big)(\min\{2,2\phi +1\})( z_3^t + z_4^t)}{f^t}\\
    &\leq 4\Big(1+\frac{1}{\phi} \Big) \cdot \Exp [\SW^{t+1} - \SW^{t}  | \mathcal{H}^t ]
\end{align*}
 Now, by the similar analysis as Theorem~\ref{thm:on-ca-scw}, we obtain the desired result.

\end{proof}

\begin{theorem}
Let $\A$ be the output of \OnCARSDstar on a binary instance $I$ with $\phi<1$. Then ${\mathbb{E}}(\SW(\A))\ge \frac{\phi}{4\phi+4}\OPT(I)$.
\end{theorem}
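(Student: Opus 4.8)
The plan is to transfer the sub-martingale argument of Theorem~\ref{thm:on-ca-scw} to the regime $\phi<1$, using the same replacement of coefficients that turns Theorem~\ref{thm:ff-scw} into Theorem~\ref{thm:ff-sw-low-Fvalue}. I would reuse verbatim the notation $N^t,\mathcal G^t,\mathcal V^t,\mathcal E^t,\mathcal H^t,\SW^t,\OPT^t$ from the proof of Theorem~\ref{thm:on-ca-scw}, together with its five-way partition $Z_1^t,\dots,Z_5^t$ of the surviving agents $N^t$ according to the utility each attains in $\OPT^t$ (pairs attaining $1+\phi$ each; pairs with combined value $1+2\phi$; pairs attaining $\phi$ each; friendless agents of value at most $1$; and friendless agents with no positive value for any available plot). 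The mechanism draws one eligible agent uniformly from $N^t$ at a time, so every per-step probability is normalized by $n^t=|N^t|$; since the \rsdstar component defers exactly the $Z_5^t$ agents, using $n^t$ as denominator only weakens the gain bound and is therefore safe for a lower bound.

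The only place where $\phi<1$ changes the argument is the single-step welfare gain. Unlike the case $\phi>1$, a selected agent with a friend can no longer be assumed to realize adjacency, because two high-value non-adjacent plots may now beat two low-value adjacent ones. Following Theorem~\ref{thm:ff-sw-low-Fvalue}, I would replace the per-step contributions $2\phi+2,\ 2\phi+1,\ 2\phi$ attached to $Z_1^t,\ Z_2^t,\ Z_3^t\cup Z_4^t$ by $2\phi+2,\ \min\{2,2\phi+1\},\ \min\{1,2\phi\}$ respectively, arguing in each case that the selected pair either captures two adjacent plots (securing the friendship bonus) or captures at least one plot of value $1$; friendless agents outside $Z_4^t\cup Z_5^t$ still contribute at least $1$. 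This yields a lower bound on $\Exp[\SW^{t+1}-\SW^t\mid\mathcal H^t]$ with these coefficients over $n^t$.

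The drop bound on the residual optimum is unchanged from Theorem~\ref{thm:ff-scw}: a newly seated pair can destroy at most a bounded number of plot-values and friendship edges of $\OPT^t$, giving the coefficients $6\phi+4,\ 6\phi+2,\ 4\phi+4$ for the respective classes. The decisive step is then the termwise inequality $\Exp[\SW(\OPT^t)-\SW(\OPT^{t+1})\mid\mathcal H^t]\le 4\bigl(1+\tfrac1\phi\bigr)\,\Exp[\SW^{t+1}-\SW^t\mid\mathcal H^t]$, which reduces to checking, for all $\phi\in(0,1)$, relations such as $6\phi+4\le 4(1+\tfrac1\phi)(2\phi+2)$, $6\phi+2\le 4(1+\tfrac1\phi)\min\{1,2\phi\}$ and $4\phi+4\le 4(1+\tfrac1\phi)\min\{1,2\phi\}$; each holds, with equality only at $\phi=1$ in the tight cases. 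Once $\mathcal E^t=\varnothing$ the mechanism degenerates to \rsdstar on a one-sided market, for which Theorem~2 of \cite{adamczyk2014efficiency} gives a factor-$3$ relation, dominated by $4(1+\tfrac1\phi)$. Defining $X^0=0$ and $X^t-X^{t-1}=4(1+\tfrac1\phi)(\SW^t-\SW^{t-1})-(\SW(\OPT^{t-1})-\SW(\OPT^t))$ makes $(X^t)$ a sub-martingale, and Doob's optional stopping theorem gives $\Exp[\SW(\A)]\ge\frac{1}{4(1+1/\phi)}\OPT(I)=\frac{\phi}{4\phi+4}\OPT(I)$.

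The main obstacle is pinning down the single-step gain bound for the \emph{online} \OnCARSDstar. Whereas \FFCTRSDstar devotes a dedicated first phase to confirmed pairs, here a friendful agent is drawn from the entire pool $N^t$, so I must argue carefully that whenever such an agent is selected while $\mathcal E^t\ne\varnothing$ she can always enforce at least the claimed $\min\{\cdot\}$ increment --- taking the bonus when it is worthwhile and otherwise seizing a value-$1$ plot --- and that deferring the $Z_5^t$ agents does not corrupt the termwise comparison between gain and drop. The remaining coefficient inequalities are routine but must be verified at the breakpoints $\phi=\tfrac12$ and $\phi=1$ where the two branches of each $\min$ meet.
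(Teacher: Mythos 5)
Your skeleton is exactly the one the paper intends: the paper's own ``proof'' of this statement is a one-line pointer, and filling it in means what you describe --- grafting the $\phi<1$ coefficient replacements of Theorem~\ref{thm:ff-sw-low-Fvalue} onto the agent-indexed partition $Z_1^t,\dots,Z_5^t$, the per-draw normalization by $n^t$, the termwise drop-versus-gain comparison, and the sub-martingale/optional-stopping machinery of Theorem~\ref{thm:on-ca-scw}. Your constant $4(1+\frac{1}{\phi})=\frac{4\phi+4}{\phi}$, the coefficient inequalities you verify, the factor-$3$ bound for the \rsdstar tail phase, and the remark that deferring $Z_5^t$ agents only weakens the gain bound are all correct.

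The gap sits precisely at the step you flag as the main obstacle and then assert rather than prove: a drawn agent can \emph{not} always enforce the increment $\min\{2,2\phi+1\}$ attached to $Z_2^t$ by ``taking the bonus when it is worthwhile and otherwise seizing a value-$1$ plot.'' Concretely, take $N^t=\{i,j,a\}$ with $i,j$ friends and $a$ friendless, and $\V^t=\{x,v,w\}$ with the single edge $\{v,w\}$; let $u_i(x)=1$, $u_j(w)=1$, $u_a(x)=1$, and all other values be $0$. Then $\OPT^t$ seats $i\to v$, $j\to w$, $a\to x$ for welfare $2+2\phi$, so $(i,j)\in Z_2^t$ with $i$ the low member. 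If $i$ is drawn, her rational move when $\phi<1$ is to seize the singleton $x$ (utility $1$, versus at most $\phi$ from invoking adjacency), so the step gain is exactly $1<\min\{2,2\phi+1\}$ for every $\phi\in(0,1)$, and her friend remains unseated; the class-by-class lower bound on $\Exp[\SW^{t+1}-\SW^t\mid\mathcal H^t]$, and with it your termwise comparison, is therefore false as written, not merely unproven. The theorem survives under either of two local repairs: (i) invoke the friend-favoring tie-breaking already implicit in the proof of Theorem~\ref{thm:on-ca-scw}, under which $i$ still declares $j$ when seizing the singleton, so $j$ picks immediately (unconstrained) and collects her still-available value-$1$ plot, making the combined step gain $2\ge\min\{2,2\phi+1\}$; or (ii) split $Z_2^t$ by the drawn agent's behavior and pair the lone-seating sub-case with its own drop bound of at most $\phi+\phi+(1+2\phi)=1+4\phi$, which satisfies $1+4\phi\le 4+\frac{4}{\phi}$ for all $\phi<1$. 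Repair (ii) is the more robust, since the paper also allows the variant of \OnCARSDstar in which an agent picking a singleton plot is forbidden from declaring a friend, which disables repair (i). One of these patches must be made explicit for the argument to close.
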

\begin{proof}
The proof of the theorem is similar to the Theorem~\ref{thm:ff-scw} and Theorem~\ref{thm:ff-sw-low-Fvalue}.
\end{proof}

\subsection{Mixed Integer Program}

In this section, we showcase a mixed integer program (MIP) formulation for the \HAX problem.

\begin{align}
max: \sum_{i \in N} U_i(\A) \\
\sum_{i \in N} a_{i,v} &\leq 1,
        & \forall v \in \V                                  \label{eqn:one-per-buyer} \\
\sum_{v \in \V} a_{i,v} &\leq 1,
        & \forall i \in N                                   \label{eqn:one-per-cottage} \\
u_i = \sum_{v \in V} 
        [ u_i(v) &a_{i,v} + \phi \sum_{f \in F_i} \sum_{v' \in \N_v} a_{f,v'} ],
		& \forall i \in N								    \label{eqn:utility-fn} \\
a_{i,v} &\in \{0,1\}, 
        \forall i \in N, v \in \V                         \label{eqn:integrality} 
\end{align}

Equation~\ref{eqn:utility-fn} is the core of the MIP formulation, and it encodes the utility function by hard-coding the friends and nearby functions for each buyer and cottage.  Equations~\ref{eqn:one-per-buyer}~and~\ref{eqn:one-per-cottage} require that each buyer be matched to at most one cottage, and each cottage be matched to at most one buyer.  And Equation~\ref{eqn:integrality} encodes the binary variable constraints for $a_{b,c}$.

\end{document}